\newtheorem{theoreme}{Theorem}
\newtheorem{lemma}[theoreme]{Lemma}
\newtheorem{definition}[theoreme]{Definition}
\newtheorem{remarque}[theoreme]{Remarque}
\newtheorem{corollary}[theoreme]{Corollary}
\newcommand{\ba}{\begin{aligned}}
\newcommand{\ea}{\end{aligned}}
\newcommand{\reeq}[1]{Equation~\eqref{eq:#1}}
\newcommand{\resec}[1]{Section~\ref{sec:#1}}
\newcommand{\redef}[1]{Definition~\ref{def:#1}}
\newcommand{\refig}[1]{Figure~\ref{fig:#1}}
\newcommand{\relem}[1]{Lemma~\ref{lem:#1}}
\newcommand{\recor}[1]{Corollary~\ref{cor:#1}}
\newcommand{\reapp}[1]{Appendix~\ref{app:#1}}
\newcommand{\gu}[1]{``#1''}	% guillemets
\newcommand{\citet}[1]{\textcite{#1}}   % using natbib command to invoke biblatex command
\renewcommand{\cite}[1]{\autocite{#1}}   % using natbib command to invoke biblatex command
\DeclareMathOperator{\id}{Id}
\DeclarePairedDelimiter{\floor}{\lfloor}{\rfloor} % mettre un astérisque pour avoir ajustement automatique de taille
\newcommand{\eps}{\varepsilon}
 \newcommand{\go}[1]{O\left( #1 \right)}
\newcommand{\paren}[1]{\left( #1 \right)}
\newcommand{\abs}[1]{\left\vert #1 \right\vert}
\newcommand{\inv}[1]{\frac{1}{#1}}
\newcommand{\nrm}[1]{\left\Vert #1 \right\Vert}
\newcommand{\nrmi}[1]{\left\Vert #1 \right\Vert_\infty}	% nrm infinie
\newcommand{\bo}{B}
\newcommand{\enstq}[2]{\left\lbrace #1 \left\vert \, #2 \right. \right\rbrace}	% ensemble tel que
\newcommand{\cpl}[2]{\paren{#1,\,#2}}	% couple
\newcommand{\tpl}[3]{\paren{#1,\,#2,\,#3}}	% triplet
\newcommand{\deq}{\mathrel{\mathop{:}}=}
\newcommand{\dist}[2]{\mathrm{d}\cpl{#1}{#2}} % distance
\newcommand{\erdosrenyi}{Erd\H{o}s-Rényi\xspace}
\newcommand{\brbalbert}{Barab\'asi-Albert\xspace}
\newcommand{\nods}{\mathcal{N}}	% nodes set
\newcommand{\node}{n}	% node
\newcommand{\mdpl}{\mathrm{M}}	% matrice déplacement
\newcommand{\statiod}{\tilde{\mu}_\mdpl}    % distribution stationnaire de \mdpl
\newcommand{\modl}{\mathcal{M}}  % modèle
\newcommand{\flot}{\Phi}    % flot
\newcommand{\fix}{\mathrm{Fix}} % espace fix
\newcommand{\sglin}{\mathcal{H}}    % sous-groupe du groupe linéaire
\newcommand{\aut}{\mathrm{Aut}} % ensemble des automorphismes
\newcommand{\tmax}{T_\infty}    % time max study
\newcommand{\timetrain}{T_\mathrm{train}} % time train set
\newcommand{\obsmat}{\hat{O}} % observation matrix
\newcommand{\mrcn}{\mdpl_{\mathrm{rec}}}    % reconstructed matrix \mdpl
\newcommand{\vecbeta}{\boldsymbol{\beta}} % vec (or matrix) of beta coefficients
\newcommand{\vecdelta}{\boldsymbol{\delta}} % vec (or matrix) of beta coefficients
\title{Network Reconstruction Problem for an Epidemic Reaction-Diffusion}
\date{2021}
\author{Louis-Brahim Beaufort, Pierre-Yves Massé, \\ Antonin Reboulet and Laurent Oudre}
\begin{document}

\maketitle

\begin{abstract}
We study the network reconstruction problem for an epidemic reaction-diffusion. These models are an extension of deterministic, compartmental models to a graph setting, where the reactions within the nodes are coupled by a diffusion. We study the influence of the diffusion rate, and the network topology, on the reconstruction and prediction problems, both from a theoretical and experimental standpoint.
Results first show that for almost every network, the reconstruction problem is identifiable. Then, we show that the faster the diffusion, the harder the reconstruction, but that increasing the sampling rate may help in this respect.
Second, we demonstrate that it is possible to classify symmetrical networks generating the same trajectories, and that the prediction problem can still be solved satisfyingly, even when the network topology makes exact reconstruction difficult.

\end{abstract}

%\tableofcontents

%\section*{TODOS}
%\input{./supports/todos.tex}

\section{Introduction}
\label{sec:intro}

Network reconstruction problems, in which one aims at reconstructing a network structure from the observation of a signal evolving on it, is an important topic of current research, spanning over numerous domains \cite{timme14, shandi11, Dong2015LaplacianML, bars2019, sardellitti2019}.  Indeed, the widespread use of networks as a modelling tool in fields as diverse as telecommunications \cite{pastorsatorras2004, newman2002}, genetics \cite{gardner2003, karlebach2008}, ecology \cite{hanski1997, tamburello2019}, or transportation of goods or humans \cite{youn2008, perfido2017}, to name but a few, makes understanding the connections between their structure, or internal properties, and the phenomena which happen over them, a crucial issue.

Recently, %\citetpvm
\citet{prasse2020} have addressed the reconstruction problem for a wide class of epidemiological models \cite{sahneh2013}. These models have gained considerable attention since the early 20\textsuperscript{th} century, following notably the classic works of %\citetkmk
\citet{kermack}. In those, individuals are categorized in compartments which describe their status with respect to an infectious disease, and the models describe the way they transition from compartments to compartments as the disease spreads through contacts, and they react (heal) to it \cite{princeton}. Quickly, the early scalar models have been enhanced, by embedding them into networks \cite{pastor2001, pastor2015, nowzari15}, in order to refine the analysis of the influence of contacts between individuals, on the spread of the disease.

In their work, %\citetpvm
\citet{prasse2020} ask two questions: first, can the network structure be retrieved from the observation of the dynamics? Secondly, even in the case of a negative answer, is it possible to approximate the structure well enough to predict the future evolution of the disease? Even if the problem they study has a linear structure, they show the answers to these questions are not straightforward. We propose to address the same questions on another very important class of network-epidemiological models, the epidemic reaction-diffusion models, also known as metapopulation models with explicit movement \cite{arino09}. Just like the model studied in %\citetpvm
\citet{prasse2020}, the nodes of the graph represent sub-populations: for instance, the cities in the transportation network of a country. However, the interactions between populations is no longer described by a static contact structure, but by a diffusion. Accordingly, the internal dynamics of each sub-population follow a standard deterministic epidemiological model (SIS, SEIR, ...)  \cite{princeton}  while flows of individuals go from node to node through a diffusion. Following their apparition in population dynamics in the 1970's, these models have since gained considerable attention in the field of mathematical epidemiology \cite{brauer2001, vddriessche2002, wang2005, allen07, tien2015, arino17}. However, we are not aware the inverse problem has been studied for these models, up to now.

The standard network reconstruction procedure \cite{timme14} is the optimisation of some regression error. However, lack of identifiability, and bad conditioning, may prevent it from being highly efficient. Structural and qualitative analysis of the model is therefore of much importance to better understand the dynamics at hand, and guide the reconstruction work. Moreover, such analysis may provide insights for other models incorporating diffusion as well \cite{haehne2019}. Our contributions are therefore the following. On the one hand, we conduct theoretical analysis on the influence of the diffusion rate on the reconstruction, and illustrate our results by experiments. On the other hand, we study the influence of network topology, both theoretically, using notably the notion of graph automorphisms \cite{simon2011, ward2019}, and experimentally. Similar questions have been asked for other epidemic models \cite{Ganesh2005TheEO, Durrett4491, vajdi2018, prasse2021}.

We first present background material in \resec{background}. Next, we present the problem we address, conduct some initial identifiability analysis, and describe the experimental setup, in  \resec{problem_init_analysis}. Then, in \resec{inf_diff_rate}, we study the influence of the diffusion rate. Finally, in \resec{inf_topology}, we study the influence of the network topology. The proofs are deferred to the appendices. The code for the experiments, implemented in Python, is available on the git repository: \url{https://reine.cmla.ens-cachan.fr/masse/network_reconstruction_reaction_diffusion}.

\section{Background}
\label{sec:background}
We first present the classical epidemiological models (\resec{standard_models}), before giving a short overview on network reconstruction techniques (\resec{state_art_net_rec}). Then, we present our contributions (\resec{contributions}). Finally, we introduce our notations (\resec{notations}).

\subsection{Deterministic, Compartmental Epidemiological Models}
\label{sec:standard_models}
Deterministic, compartmental epidemiological models represent the propagation of a disease within a population by first segmenting the population in compartments, describing the status with respect to the disease \cite{princeton}. Classical compartments include the \gu{susceptible} (S), which gathers people which may contract the disease when confronted to \gu{infected} (I) people, who later will have \gu{recovered} (R). Transitions from compartments to compartments are governed by differential equations. One simple and generic model, which we use for simplicity throughout our study, is the SIR model. Three scalar functions $s$, $i$ and $r$ track the numbers of people in each compartment, and they evolve according to, for all $t\geq 0$,
\begin{equation}
\label{eq:sir-scalar}
\left\lbrace \ba
\frac{ds}{dt} &= - \beta s i \\
\frac{di}{dt} &= \beta s i - \delta i \\
\frac{dr}{dt} &= \delta i.
\ea \right.
\end{equation}
Here, $\beta$ and $\delta$ are positive real numbers. The parameter $\beta$ is often called the infection rate, and $\delta$ is the curing rate. The quantity $\delta^{-1}$ may be interpreted as the average time an individual remains infected, before healing \cite{MAT09}. The fact it is positive means people heal in finite time. It is well-known that the system of \reeq{sir-scalar} has a global solution for every initial condition $\tpl{s_0}{i_0}{r_0}$ with only nonnegative coordinates, and that solutions tend to equilibria of the form $\tpl{s_\infty}{0}{r_\infty}$ \cite{princeton}.

Works have extended these models to graphs, in order to increase their representative power \cite{nowzari15}. Nodes of the graphs represent either individuals, or sub-populations (cities, or countries, for instance). Accordingly, let us consider a possibly directed, (strongly, if directed) connected graph of size $N=\abs{\nods}$, where $\nods$ is the set of nodes $\node$.
In each node of the graph, a standard SIR reaction happens.
We write therefore $\beta_\node$ the infection rate, and $\delta_\node$ the curing rate, of node $\node$. Depending on the context, we write $\vecbeta$ (resp. $\vecdelta$) the diagonal matrix of coefficients $\beta_\node$ (resp. $\delta_\node$), or the vector $\paren{\beta_1,\,\ldots,\beta_N}$ (resp. $\paren{\delta_1,\,\ldots,\,\delta_N}$). Finally, we refer to the $\beta_\node$'s and $\delta_\node$'s as epidemiological parameters.
As we said in the introduction, we consider the model  where the internal node dynamics are coupled by a diffusion \cite{brauer2001, vddriessche2002, wang2005, allen07, tien2015, arino17}. It is governed by a diffusion matrix\footnote{We adopt this terminology, for lack of a universally agreed term for these matrices.}, which we define as follows.
\begin{definition}[Diffusion Matrix]
\label{def:diff_mat}
A diffusion matrix $\mdpl$ is first Metzler, that is for $i\neq j$, we have $\mdpl_{ij}\geq 0$. Then, it is irreducible\footnote{This is possible if the graph is directed because we ask it is then strongly connected.}. Thirdly, its columns have vanishing sums.
\end{definition}
The resulting reaction-diffusion dynamics is given by
\begin{equation}
\label{eq:sir-metapop}
\left\lbrace \ba
\frac{dS}{dt} &= - \vecbeta S \odot I + \mdpl S \\
\frac{dI}{dt} &= \vecbeta S \odot I - \vecdelta I + \mdpl I \\
\frac{dR}{dt} &= \vecdelta I + \mdpl R,
\ea \right.
\end{equation}
where, as we explain below in \resec{notations}, $\odot$ represents the coordinate-wise product\footnote{For instance, for a node $\node$, the equation on $S_\node$ reads: $d S_\node/dt = -\beta_\node S_\node(t)I_\node(t) + \sum_{i=1}^N \mdpl_{\node, i} S_i(t)$. Our notation is not standard, but we think it makes clearer the link of the graph system of \reeq{sir-metapop} with the scalar system of \reeq{sir-scalar}.}. Standard results guarantee that the solution to \reeq{sir-metapop} is global, and converges to a fix point of the form $\tpl{S}{0}{R}$, as $t\to\infty$ \cite{arino09}. Moreover, the total population is preserved, that is $\sum_\node S_\node(t)+I_\node(t)+R_\node(t)$ is constant.
Finally, standard Perron-Frobenius theory \cite{meyer2000} shows a diffusion matrix admits a unique stationary distribution, that is a positive vector $\statiod$ summing to $1$ such that $\mdpl\statiod=0$. Moreover, solutions of $dX/dt = \mdpl X$ with initial condition $X_0$ having a nonzero coordinate along $\statiod$ converge to $\statiod$, as $t\to \infty$. In particular, since the total population $S+I+R$ satisfies this equation, and its initial condition has a nonzero coordinate along $\statiod$ (one quickly checks it equals $\sum_\node S_\node(0)+I_\node(0)+R_\node(0)$, which is nonzero as $S(0)$, $I(0)$ and $R(0)$ have nonnegative coordinates), it converges to this stationary distribution, as $t\to\infty$.

\subsection{Background on Issues in Network Reconstruction}
\label{sec:state_art_net_rec}
The network reconstruction problem from observations, where one aims at expliciting the topology of a network of $N$ nodes, by observing the values taken by some dynamical system which evolves on it, has been extensively studied in the literature (see for instance the review \cite{timme14}). The network is described by some matrix $\mdpl$ (typically, the adjacency matrix, possibly weighted). Observations are often gathered in two matrices, $\hat Y$ and $\obsmat$, which typically belong to $\in\mathcal{M}_{N\times K}(\mathbb{R})$, where $K$ is the number of measurements. Often, $\hat Y$ gathers estimates of the time derivatives of the state of the dynamical system in each node, at the different measurement times, and $\obsmat$ is the so-called observation matrix gathering the values in each node, at the same times. Then, one knows the relation  $\hat Y = \mdpl \obsmat$ must be satisfied. Therefore, the problem amounts to solving this regression equation. We first describe the different observations possible, then address the solving of the regression.

Observations may first consist in measurements of the answer the system gives to some user-driven perturbation of its dynamics \cite{gardner2003, yeung2002, yu2010}. In the case of non linear dynamics, these perturbations may occur near a fixed point, the interest being that the first-order expansion of this system then depends linearly on the network \cite{gardner2003}, so that the observations $\hat Y$, $\obsmat$, and the network matrix $\mdpl$, indeed satisfy the $\hat Y = \mdpl \obsmat$ equation. Alternatively, observations may be obtained through mere observation of the system \cite{shandi11, makarov2005}.
The nature of $\hat Y$ and $\obsmat$ moreover depends on whether a model for the dynamics studied is known \cite{shandi11, gardner2003, wang2011, prasse2020}, or not \cite{quinn2011, barzel2013, mangan2016, casadiego2017}.  For instance, in %\citetvanbussel
\citet{vanbussel2011}, the authors use detailed knowledge of the evolution of a synthetic model of a biological synaptic network between spiking times, to obtain the matrices $\hat Y$ and $\obsmat$ verifying the $\hat Y=\mdpl \obsmat$ equation. 
On the other hand, \cite{casadiego2017} only assume some very general relation between the first order derivatives of the dynamical system, and the values it takes, in order to obtain similar relations.

Once obtained the observations such that the equation $\hat Y=\mdpl \obsmat$ holds, one must then solve the regression problem. It may be over-determined, if $K > N$, or under determined, if $K < N$ \cite{stoer1993}. Even if $X$ has rank $N$, it may be ill-conditioned, thus preventing efficient solving by mere matrix inversion. To address these issues, a standard choice is to minimise the regression error with respect to some norm. One choice is then between $L^1$ or $L^2$ (least-squares) optimisation. The former induces sparsity, which may be desirable. For instance, %\citetmangan
\citet{mangan2016} assume the dynamics decompose in some well-chosen basis, and that most of the coefficients in the expansion vanish. They then identify a subspace to which the vector of coefficients belongs, and finally use standard algorithms to find the sparsest vector in this subspace.
In %\citetyeung
\citet{yeung2002}, the authors use an SVD decomposition of some observation matrix to parametrize the set of networks consistent with the data, and then use sparse regression to find the sparsest such network.
In %\citetwang
\citet{wang2011}, the authors decompose the dynamics over some infinite basis, then use compressed sensing to evaluate the coefficients, only few of them are then nonzero.

Least-square optimization is on the other hand less costly, and better suited for over-determined systems.
In %\citetpvm
\citet{prasse2020}, the authors use a least-square optimisation, but add a $L^1$ penalty in order to enforce some degree of sparsity, thus solving:
\begin{equation*}
\min_{\mdpl \in \mathcal{M}_N(\mathbb{R}^N)} \nrm{\hat{Y} - \mdpl \hat{O}}_2
  + \lambda \, \nrm{\mdpl}_1,
\end{equation*}
where $\lambda$ is selected by cross-validation.

Finally, let us also mention %\citettyrcha
\citet{tyrcha2014}, which in another vein differentiates the dynamics of the model, in order to train it to reproduce the observations, as is usual for Recurrent Neural Networks.

\subsection{Contributions of the article}
\label{sec:contributions}

In our work, we  study the reconstruction, and prediction, problems, for an epidemic reaction-diffusion. We assume known a model, and we consider that observations are a given, standalone time-series, which is arguably the harder setting observations wise, and which seems more relevant in the case of epidemic dynamics.
We first show that for almost every network, the reconstruction problem is identifiable (\relem{traj_oft_whole_space}).
Then, we show that the quicker the diffusion, the lower the numerical rank of the observation matrix (\recor{rank_diff_rate_infty}), and the harder the reconstruction, but that increasing sampling helps reconstruct the network.
Then, we classify symmetrical networks generating the same trajectories (\relem{classification}). Finally, we show experimentally, on synthetic data constructed with random graph generators exhibiting different topologies, that reconstruction is easier for more \gu{constrained} topologies, and that the prediction problem can still be solved satisfyingly even when the network topology makes exact reconstruction difficult.
We use least-squares under constraints to solve numerically the reconstruction problem (see \resec{exp_setup}), in the experiments.

\subsection{Notations and main definitions}
\label{sec:notations}
We use the capital letter $X$ to designate vectors on $\mathbb{R}^{3N}$, for some integer $N\geq 1$, which write $X=\tpl{S}{I}{R}$, with $S, I, R\in\mathbb{R}^N$. Lower case $x$ designates vectors on $\mathbb{R}^3$, with $x=\tpl{s}{i}{r}$, and $s, i$ and $r$ real numbers. Whenever we consider some function $f$ defined over $\mathbb{R}^N$, we choose to extend the notation in a straightforward way to $\mathbb{R}^{3N}$, by writing, for $X=\tpl{S}{I}{R}$ as above, $f(X) \deq \tpl{f(S)}{f(I)}{f(R)}$. Whenever $X$ is a function describing a trajectory of a dynamical system, $X : \mathbb{R}_+ \to \mathbb{R}^{3N}$, we write $Y=f(X)$ the function defined by, for all $t\geq 0$, $Y(t)\deq f(X(t))$. As a result, combining with what precedes, for $X=\tpl{S}{I}{R}$, for all $t\geq 0$, we have $Y(t)=\tpl{S(t)}{I(t)}{R(t)}$. For a set $\mathcal{S}\subset\mathbb{R}^N$, for $X=\tpl{S}{I}{R}\in\mathbb{R}^{3N}$, we write $X\in\mathcal{S}$ to mean that $S$, $I$ and $R$ belong to $\mathcal{S}$.
Then, for two vectors $u$ and $v$ of equal dimensions, we write $u\odot v$ their coordinate wise product: $u\odot v$ is the vector $\paren{u_i v_i}_i$. 
Finally, $1_N$ designates the vector of $\mathbb{R}^N$ will all coordinates equal to $1$.
Let us finally introduce the following three definitions, which help us formalise the setting. 
\begin{definition}[Model]
We call model, and write $\modl=\tpl{\mdpl}{\cpl{\vecbeta}{\vecdelta}}{X_0}$, a tuple consisting of a diffusion matrix $\mdpl$, epidemiological parameters gathered in $\vecbeta$ and $\vecdelta$, and an initial condition $X_0=\tpl{S_0}{I_0}{R_0}\in\mathbb{R}^{3N}$.
\end{definition}
\begin{definition}[Flow on a model]
\label{def:flow_mod}
A flow $\flot$ on the set of models $\mathfrak{M}$ is a mapping:
\begin{equation*}
\ba
\flot & :  & \mathfrak{M} & \to & \mathcal{C}\cpl{\mathbb{R}_+}{\mathbb{R}^{3N}} \\
& & \modl & \mapsto & \flot\paren{\modl}
\ea
\end{equation*}
defined by, for each model $\modl=\tpl{\mdpl}{\cpl{\vecbeta}{\vecdelta}}{X_0}$, for all $t\geq 0$, $\flot_t\paren{\modl}$ is the value at time $t$ of the solution of the differential \reeq{sir-metapop}, with initial condition $X_0$, that is $\flot_t\paren{\modl}=\tpl{S(t)}{I(t)}{R(t)}$.
\end{definition}
Slightly abusing notations, in the following, we sometimes write $\flot(\mdpl)$ when $\cpl{\vecbeta}{\vecdelta}$ and $X_0$ are fixed, so that the model only depends on the choice of the diffusion matrix.

Let us finally introduce the observation matrix  \cite{tyrcha2014}, and the vectors of estimates of the reaction terms, and the derivatives.
We do not observe the whole trajectories, but only some samples of them. For some integer $K \geq 1$, let us then consider the sampling times $0=t_0 < t_1 < t_2 < \ldots < t_K$.
Let for each node $\node$ $\hat S_\node(t_k)$ be the (possibly noisy) observation of compartment $S$ in node $\node$ at time $t_k$, (and likewise for the other compartments).
We can also estimate the vectors of derivatives, and of reaction terms, of \reeq{sir-metapop}, from the observations, as is done in %\citetshandi
\citet{shandi11}. For every $1 \leq k \leq K$, we define $\hat \rho_S(t_k) = \vecbeta \hat S(t_k) \cdot \hat I(t_k)$ the vector of reaction terms on $S$ at time $t_k$, and $\hat D_S(t_k) = \paren{\hat S(t_{k})-\hat S(t_{k-1})}(t_{k}-t_{k-1})^{-1}$ the estimate of the derivative on S at time $t_k$. We do likewise for the other compartments.

\begin{definition}[Observation matrix, derivatives and reaction terms]
\label{def:observation_matrix}
Let the observation matrix on $S$ be
\begin{equation*}
\hat O_S =
\begin{pmatrix}
\hat S_1(t_1) & \ldots & \hat S_1(t_K) \\
\vdots & \ddots & \vdots \\
\hat S_N(t_1) & \ldots & \hat S_N(t_K)
\end{pmatrix} \in \mathcal{M}_{N\times K}\paren{\mathbb{R}}.
\end{equation*}
Note likewise $\hat O_I$ the observations on $I$, and $\hat O_R$ those on $R$.
Define finally the matrix by block $\hat O\cpl{\paren{t_k}}{\flot(\mdpl)} = \tpl{\hat O_S}{\hat O_I}{\hat O_R}\in\mathcal{M}_{N \times 3 K}\paren{\mathbb{R}}$. This is the observation matrix associated with the sampling times $\paren{t_k}$, and the flow $\flot(\mdpl)$.

Likewise, we write $\hat \rho_S, \hat D_S \in\mathcal{M}_{N\times K}\paren{\mathbb{R}}$ the matrices of reaction terms (resp. derivatives) on $S$, and likewise for the other compartments. We finally define $\hat{\rho}=\tpl{\hat \rho_S}{\hat \rho_I}{\hat \rho_R}\in\mathcal{M}_{N \times 3 K}\paren{\mathbb{R}}$ and $\hat D=\tpl{\hat D_S}{\hat D_I}{\hat D_R}\in\mathcal{M}_{N \times 3 K}\paren{\mathbb{R}}$.
\end{definition}

\section{Problem Studied, Identifiability and Experimental Setup}
\label{sec:problem_init_analysis}
We present the problem we address (\resec{pb_studied}), then initate the study of its identifiability (\resec{identifiability}), and finally present the setting in which we conduct the experiments (\resec{exp_setup}).
\subsection{Problematic: Reconstruction and Prediction}
\label{sec:pb_studied}
We assume known the initial condition $X_0$, and the epidemiological parameters $\beta_\node$'s and $\delta_\node$'s. Some unknown diffusion matrix $\mdpl^*$ then generates a flow $\tpl{S}{I}{R}=\flot(\mdpl^*)$, and given as observations the trajectories $S$, $I$ and $R$, we address the following two questions.
\begin{enumerate}
\item Question 1. Can we estimate $\mdpl^*$ from the observations? In other words, do they first uniquely define $\mdpl^*$? And in so, is it possible to estimate $\mdpl^*$ from them?
\item Question 2. Can we predict the future evolution of the system if, for some $\timetrain < \infty$, we can only observe the trajectories in some initial phase $0 \leq t \leq \timetrain$ of the system, that is the observations only consist in $\cpl{\flot_t(\mdpl^*)}{t\leq \timetrain}$? 
\end{enumerate}

\subsection{Identifiability of the Diffusion}
\label{sec:identifiability}

We now conduct some preliminary analysis on the identifiability of the diffusion.
We have the following characterisation of the set of diffusion matrices $\mdpl$ which produce the same trajectories as $\mdpl^*$ (see \reapp{model_problems} for a proof, and likewise for future results).
\begin{lemma}[Diffusions Generating the Same Trajectories]
\label{lem:set_same_traj}
Let $\mdpl^*$ be a diffusion matrix, and write $\tpl{S}{I}{R}=\flot\paren{\mdpl^*}$.
Then, every matrix $\mdpl=\mdpl^* + H$, such that
%\begin{multline*}
%\left\lbrace \mdpl \, \text{diffusion matrix} \left\vert %\mdpl=\mdpl^*+H, \right. \right.\\
%\text{and} \, \forall t\geq 0, S(t), I(t), R(t) \in \ker{H} \left. \right\rbrace
%\end{multline*}
first $\mdpl$ is a diffusion matrix, and secondly such that for all $t\geq 0$, we have\footnote{As explained in \resec{notations}, $\flot_t(\mdpl)\in\ker{H}$ means that $S(t)$, $I(t)$ and $R(t)$ belong to $\ker{H}$.} $\flot_t(\mdpl)\in\ker{H}$,
produces the same trajectories as $\mdpl^*$.
\end{lemma}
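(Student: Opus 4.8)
The plan is to show that the trajectory generated by the candidate matrix $\mdpl$ is in fact also a solution of the system driven by $\mdpl^*$, and then to invoke uniqueness of solutions. Write $\tilde X = \tpl{\tilde S}{\tilde I}{\tilde R} = \flot(\mdpl)$ for the trajectory associated with $\mdpl$, the parameters $\cpl{\vecbeta}{\vecdelta}$ and the initial condition $X_0$ being the same ones underlying $\flot(\mdpl^*)$. By \redef{flow_mod}, this $\tilde X$ solves \reeq{sir-metapop} with diffusion matrix $\mdpl$; for instance its $S$-component satisfies, for all $t \geq 0$,
\begin{equation*}
\frac{d\tilde S}{dt} = -\vecbeta \tilde S \odot \tilde I + \mdpl \tilde S,
\end{equation*}
and analogously for $\tilde I$ and $\tilde R$.

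The key step is then to feed the second hypothesis into these equations. Since $\mdpl = \mdpl^* + H$, for any vector $v \in \ker H$ one has $\mdpl v = \mdpl^* v + H v = \mdpl^* v$. The assumption $\flot_t(\mdpl) \in \ker H$ means precisely that $\tilde S(t)$, $\tilde I(t)$ and $\tilde R(t)$ all lie in $\ker H$ for every $t \geq 0$, so I may replace $\mdpl$ by $\mdpl^*$ in each diffusion term above. Consequently $\tilde X$ satisfies \reeq{sir-metapop} with the diffusion matrix $\mdpl^*$ and initial condition $X_0$ --- that is, the very Cauchy problem solved by $X = \flot(\mdpl^*)$.

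To conclude, I appeal to uniqueness. The right-hand side of \reeq{sir-metapop} is polynomial in the state, hence $C^1$ and locally Lipschitz, so the Cauchy-Lipschitz theorem yields uniqueness of the solution to this initial value problem; together with the global existence recalled in \resec{standard_models}, this forces $\tilde X = X$ on all of $\mathbb{R}_+$. Thus $\flot(\mdpl) = \flot(\mdpl^*)$, i.e. $\mdpl$ produces the same trajectories as $\mdpl^*$, as claimed.

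I do not expect a genuine obstacle here: the argument is a few lines and purely conceptual, the only subtlety being that the kernel condition must be applied to the trajectory generated by $\mdpl$ (and not by $\mdpl^*$), which is exactly what makes the replacement $\mdpl v = \mdpl^* v$ legitimate and converts an $\mdpl$-trajectory into an $\mdpl^*$-trajectory. The lemma only gives a sufficient condition; the substantive difficulty, handled elsewhere, is to determine when nontrivial such perturbations $H$ actually exist, where the constraints on diffusion matrices from \redef{diff_mat} come into play.
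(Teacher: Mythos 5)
Your proof is correct and takes essentially the same route as the paper: exhibit one trajectory as a solution of both Cauchy problems, then conclude by uniqueness of solutions. The only (harmless) difference is the direction --- the paper applies the kernel condition to $\flot(\mdpl^*)$ and shows that this trajectory also solves \reeq{sir-metapop} with $\mdpl$, whereas you apply it, as the statement literally reads, to $\flot(\mdpl)$ and show that it also solves the system with $\mdpl^*$; since $\ker H$ is common to both perturbation signs and either version forces the two flows to coincide, the two readings of the hypothesis are interchangeable.
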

As a result, provided the vector space generated by the trajectories, that is by the vectors $S(t)$, $I(t)$ and $R(t)$, for $t\geq 0$, is the whole space $\mathbb{R}^N$, then the answer to our first question is affirmative (as the only $H$ possible vanishes over the whole space, therefore vanishes). Therefore, a fundamental question governing the issue of the identifiability of the diffusion matrix is the existence of strict subspaces of $\mathbb{R}^N$ in which the trajectories evolve.
This moreover gives us a practical criterion to evaluate if the diffusion matrix generating a given trajectory is unique: we check if the observation matrix has rank $N$, which is sufficient to guarantee the uniqueness.
Now, often the trajectories generate the whole space, as the next result shows.
\begin{lemma}[Almost Everywhere Identifiability] %[Trajectories Often Generate the Whole Space]
\label{lem:traj_oft_whole_space}
Let $0 \leq t_1 < ... < t_N < \infty$ be a subdivision of the nonnegative real half-axis. 
Then, for almost every $\mdpl$, $X_0$, for all $\vecbeta$, $\vecdelta$, writing $\modl=\tpl{\mdpl}{\cpl{\vecbeta}{\vecdelta}}{X_0}$, the space generated by the samples of the trajectories at instants $t_1, ..., t_N$ (that is $\flot_{t_1}(\modl),\ldots, \flot_{t_N}(\modl)$), is equal to $\mathbb{R}^N$.
\end{lemma}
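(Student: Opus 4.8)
The plan is to exploit the conservation of the total population in order to eliminate all dependence on $\vecbeta$ and $\vecdelta$, and then to reduce the statement to the non-vanishing of a single real-analytic function, which I settle by exhibiting one explicit model. First I would set $P(t)=S(t)+I(t)+R(t)$ and sum the three lines of \reeq{sir-metapop}: the reaction terms cancel, so $P$ solves the purely diffusive linear equation $dP/dt=\mdpl P$, giving $P(t)=e^{t\mdpl}P_0$ with $P_0=S_0+I_0+R_0$ (this is exactly the remark made below \reeq{sir-metapop}). Hence, for each $k$, the vector $e^{t_k\mdpl}P_0=S(t_k)+I(t_k)+R(t_k)$ lies in the span of the sampled vectors $S(t_k),I(t_k),R(t_k)$, so it suffices to prove that for almost every $\paren{\mdpl,X_0}$ the $N$ vectors $e^{t_1\mdpl}P_0,\ldots,e^{t_N\mdpl}P_0$ already span $\mathbb{R}^N$. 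Crucially, this condition does not involve $\vecbeta$ or $\vecdelta$, which is precisely what yields the uniform \gu{for all $\vecbeta,\vecdelta$}; and since $X_0\mapsto P_0$ is a surjective linear map $\mathbb{R}^{3N}\to\mathbb{R}^N$, a null set of bad $P_0$ pulls back to a null set of bad $X_0$, so it is enough to argue with the pair $\paren{\mdpl,P_0}$.

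Next I would consider
\begin{equation*}
\Delta\paren{\mdpl,P_0}=\det\paren{e^{t_1\mdpl}P_0\mid\cdots\mid e^{t_N\mdpl}P_0}.
\end{equation*}
The matrix exponential is real-analytic in $\mdpl$ and the determinant is polynomial in its entries, so $\Delta$ is real-analytic in $\paren{\mdpl,P_0}$. The diffusion matrices whose off-diagonal entries are all strictly positive, together with $P_0\in\mathbb{R}^N$, form a connected open subset of a Euclidean space, and on such a set a real-analytic function is either identically zero or has a zero set of Lebesgue measure zero; the diffusion matrices with at least one vanishing off-diagonal entry form themselves a null set. Therefore the claim follows once a single pair $\paren{\mdpl,P_0}$ with $\Delta\neq 0$ is produced.

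For the witness I would take $\mdpl$ symmetric with strictly positive off-diagonal entries (a negative weighted Laplacian of the complete graph) whose eigenvalues $\lambda_1>\cdots>\lambda_N$ are pairwise distinct, which holds for generic weights, with $\lambda_1=0$ and eigenvector $\statiod\propto 1_N$. Taking an orthonormal eigenbasis $v_1,\ldots,v_N$ and $P_0=\sum_j c_j v_j$ with all $c_j\neq 0$, one has $e^{t_k\mdpl}P_0=\sum_j c_j e^{\lambda_j t_k}v_j$, so in the basis $\paren{v_j}$ the matrix defining $\Delta$ factors as $\mathrm{diag}\paren{c_1,\ldots,c_N}\crochets{e^{\lambda_j t_k}}_{j,k}$, whence
\begin{equation*}
\abs{\Delta}=\abs{\textstyle\prod_j c_j}\cdot\abs{\det\crochets{e^{\lambda_j t_k}}_{j,k}}.
\end{equation*}
Setting $x_j=e^{\lambda_j}$, the $x_j$ are pairwise distinct and positive while $t_1<\cdots<t_N$, so $\crochets{x_j^{t_k}}_{j,k}$ is a generalized Vandermonde matrix, which is nonsingular. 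Hence $\Delta\neq 0$ at this point, completing the argument.

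The standard machinery here, namely analytic dependence of the ODE flow on its parameters and the measure-zero property of zero sets of real-analytic functions on connected domains, is routine. The two load-bearing steps are the conservation-of-population reduction, which is what makes the exceptional set of $\paren{\mdpl,X_0}$ \emph{uniform} over all $\paren{\vecbeta,\vecdelta}$, and the explicit non-degeneracy computation, which rests on the nonsingularity of a generalized Vandermonde matrix with distinct positive bases and increasing exponents; I expect the latter to be the only point requiring genuine care.
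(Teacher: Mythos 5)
Your opening reduction is exactly the paper's: its proof also sets $K=S+I+R$, checks from \reeq{sir-metapop} that the reaction terms cancel so that $dK/dt=\mdpl K$, and uses that the samples $K(t_k)$ lie in the span of the sampled trajectories, so that identifiability is governed by the pair $\paren{\mdpl,K_0}$ alone; in both arguments this is what makes the exceptional set uniform over $\paren{\vecbeta,\vecdelta}$. Where you genuinely diverge is the linear step. The paper proves it concretely (\relem{trajs_whole_space_linear}): it first shows that almost every diffusion matrix has $N$ distinct eigenvalues (\relem{diffusion_matrix_distinct_eigenvalues}, via a discriminant-plus-Fubini argument), then expands the linear flow in an eigenbasis and reduces spanning to the implication \gu{$\sum_k \mu_k e^{\lambda_\node t_k}=0$ for all $\node$ forces $\mu=0$}, proved by a Rolle induction (\relem{tech_lem}). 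You instead view the spanning determinant $\Delta$ as a real-analytic function on the product of $\mathbb{R}^N$ with the set of column-sum-zero matrices having strictly positive off-diagonal entries (which is convex, hence indeed a connected open subset of an $\paren{N^2-N}$-dimensional space), and invoke the null-zero-set property of nontrivial real-analytic functions, so that a single witness finishes the job. This is a legitimate and arguably slicker organization: the analytic zero-set fact absorbs both of the paper's measure-theoretic steps, and you need an eigendecomposition at only one well-chosen symmetric point rather than for almost every diffusion matrix.

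Two caveats, both reparable. First, the generalized Vandermonde nonsingularity you invoke is, after the substitution $x_\node=e^{\lambda_\node}$, literally the paper's \relem{tech_lem}; it is classical (Chebyshev systems, total positivity), but if you cannot point to a reference you will end up reproving it by the same Rolle induction, so the \gu{genuine care} you anticipate there is real and is where the two proofs reconverge. Second, the existence of a positive-weight symmetric witness with simple spectrum is asserted \gu{for generic weights}, which is itself a claim of the same nature as \relem{diffusion_matrix_distinct_eigenvalues} and should not be taken for free; the cheapest patch is to observe that a witness on the boundary suffices, since if $\Delta$ vanished identically on the open set it would, by continuity, vanish on its closure, and that closure contains the negative Laplacian of the path graph, whose eigenvalues $2\cos\paren{k\pi/N}-2$, $k=0,\ldots,N-1$, are pairwise distinct. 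With these two points made precise, your proof is correct.
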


This might give the impression the problem is solved, for almost every $\mdpl$ and $X_0$. Indeed, assume the space generated by the trajectories is the whole of $\mathbb{R}^N$. Then, the observation matrix $\obsmat\cpl{\paren{t_k}}{\flot(\mdpl)}$ has rank $N$. We therefore know the image of $\mdpl$ on a basis, which fully determines it.
However, the conditioning of the observation matrix is often poor in practise, so that reconstruction of $\mdpl^*$ by extracting a basis is inefficient. In the next two sections, we investigate two reasons why this is the case. Firstly, we study the influence of the speed of diffusion in \resec{inf_diff_rate}. Secondly, we consider the topology of the graph in \resec{inf_topology}. Nonetheless, the fact the diffusion is often unique means that, when running a reconstruction algorithm, we can have good hope it will succeed in finding a good fit, which we show is the case in the experiments.

\subsection{Experimental Set-Up}
\label{sec:exp_setup}
For each experiment, we start by generating a graph from a random graph generator. We use four random graphs, with different topologies: the \erdosrenyi and the Waxman graph, which are quite connected graphs, and therefore quite \gu{close} to a complete graph, and the Relaxed Caveman and extended \brbalbert graphs, which are less connected, and exhibit a more clustered structure. In that sense, the diffusion is more constrained by these graphs, and we expect the reconstruction problems to be easier in that case. We study graphs of sizes ranging from $N=20$ to $N=140$. Then, we draw the coefficients of the diffusion matrix $\mdpl^*$ uniformly on $[0,1]$. Finally, when needed, we renormalise the diffusion matrix by the typical time of diffusion $\tau >0$. The epidemiological coefficients $\beta_\node$ and $\delta_\node$ are drawn at random, such that for each node $\node$, we have $\beta_\node \sim \abs{\mathcal{N}\cpl{1}{2}}$ and $\delta_\node \sim \abs{\mathcal{N}\cpl{3\times 10^{-2}}{6\times10^{-2}}}$.

Next, we simulate the ground truth trajectories on the time interval $[0, \tmax]$, with $\tmax=10$. We use a uniform time discretisation step of $\paren{\Delta t}_\mathrm{ground truth} = 10^{-3}$, and a Runge-Kutta discretisation scheme of order 4.
For each setting, we repeat the experiments at least 5 times, so as to control the stochastic fluctuations.
Finally, the train set in which we sample the observations is $[0, \timetrain]$, with $\timetrain=2$. We use various sample steps $\paren{\Delta t}_\mathrm{sample}$, ranging between $2\times 10^{-3}$ and $10^{-2}$.
Recall from \resec{standard_models} that we write $\statiod$ the stationary distribution of the diffusion matrix $\mdpl$. In each experiment, we use as initial condition $X_0 = \tpl{S(0)}{I(0)}{R(0)}= \tpl{s_0\statiod}{i_0\statiod}{r_0\statiod}$, where $s_0$, $i_0$ and $r_0$ are nonnegative real numbers, and $i_0 >0$. As a result, the vector of initial susceptibles $S(0)$ is proportional to the stationary distribution, and likewise for $I(0)$ and $R(0)$.
We compute the reconstructed diffusion matrix by solving
\begin{equation*}
\ba
\min_{\mdpl \in \mathcal{M}_N(\mathbb{R})} &\nrm{\hat{D} - \hat{R} - \mdpl \hat{O}}_2, \\
\text{such that} &\left\lbrace \ba
\mdpl_{i,j} &\geq 0, \, i\neq j, \\
\sum_j \mdpl_{i,j} &= 0 \quad \text{for all node} \, i.
\ea \right.
\ea
\end{equation*}
This is a convex optimisation problem.
We solved it using the Python package CVXPY \cite{diamond2016cvxpy, agrawal2018rewriting}. We write $\mrcn$ the matrix obtained.
Moreover, to truly enforce the fact $\mrcn$ is a diffusion matrix, we post-processed the matrix obtained by enforcing that column sums vanish: for every node $\node$, we replaced the diagonal coefficient $\mrcn\cpl{\node}{\node}$ by $-\sum_{i=1}^N \mrcn\cpl{i}{\node}$.

To assess the reconstruction, we use two metrics.
First, we use the AUC \cite{FAWCETT2006861} on the presence of edges, as \citet{prasse2020}. It is computed thanks to the corresponding fonction in Scikit-learn \cite{scikit-learn}. 
Secondly, we evaluate the prediction error, that is the norm of the difference between the trajectories computed with the true model, and those computed with the reconstructed diffusion $\mdpl_{\mathrm{rec}}$, by 
\begin{equation}
\label{eq:pred-error}
\inv{N} \inv{\tmax - \timetrain} \sum_{p=1}^{p_\mathrm{max}} \nrm{\flot_{t_p}(\mdpl^*) - \flot_{t_p}(\mdpl_\mathrm{rec})}^2 \paren{t_p-t_{p-1}},
%\int_{\timetrain}^{\tmax} \nrm{\flot_t(\mdpl) - \flot_t(\mdpl_\mathrm{rec})}^2\,dt, 
\end{equation}
where $p_\mathrm{max}=\floor{\frac{\tmax-\timetrain}{\paren{\Delta t}_\mathrm{grountruth}}}$ and $\paren{t_p}$ is the discretisation scheme used for the simulations, whose beginning has been removed, so that $t_0=\timetrain$.

Computations with the \brbalbert graph were proner to numerical instabilities. We believe this is due to its topology being more constrained. As a result, we modified a bit the experimental setting for this specific graph, increasing sampling to $\paren{\Delta t}_\mathrm{sample} = 4\times 10^{-3}$,
and increasing the number of repetitions to $30$.

\section{Influence of the Diffusion Rate}
\label{sec:inf_diff_rate}
We now study the influence of the diffusion rate on the feasibility of the network reconstruction, first theoretically (\resec{diff_rate_analysis}), then experimentally (\resec{rcn_diff_rate_sampling}).

\subsection{Analysis}
\label{sec:diff_rate_analysis}
One difficulty of the network reconstruction problem is the conditioning of the observation matrix (\redef{observation_matrix}), which may be poor.
In particular, its numerical rank 
may be significantly lower than $N$, as observed also in \cite{prasse2020}. In our case, this may be partly due to the homogenisation performed by the diffusion. Indeed, given different epidemiological parameters, and different population sizes, the internal dynamics of the different nodes evolve differently. However, the diffusion tends to homogeneise each compartment, so that $S(t)$ tends to a vector proportional to the stationary distribution, $\statiod$, and likewise for $I(t)$ and $R(t)$. As a result, the diffusion tends to worsen the conditioning of a basis. This effect depends on the time-scale at which diffusion occurs, with respect to that at which the reactions in each node occur. We first show, in the following \relem{limit_trajs_diffusion_rate_infty}, that when the typical time of evolution of the diffusion, $\tau$, goes to $0$ (equivalently, the diffusion rate $1/\tau$ goes to infinity), and in the presence of fixed epidemiological parameters, the trajectories tend to those of a scalar SIR systems, which coefficients we express in terms of the $\beta_\node$'s, the $\delta_\node$'s and the stationary distribution, times the stationary distribution for each compartment. We then address the consequences for the numerical rank in \recor{rank_diff_rate_infty}.
For any $\tau >0$, we write $\flot^\tau=\flot\paren{\frac{\mdpl}{\tau}}$, that is the flow obtained by replacing $\mdpl$ by $\mdpl/\tau$ in \reeq{sir-metapop}.
\begin{lemma}[Limit Trajectories for Diffusion Rate going to Infinity]
\label{lem:limit_trajs_diffusion_rate_infty}
Let $\modl=\tpl{\mdpl}{\cpl{\vecbeta}{\vecdelta}}{X_0}$ be a model, and assume the initial condition $X_0$ is such that $S_0$, $I_0$ and $R_0$ are proportional to the stationary distribution $\statiod$. 
Write $\tpl{s}{i}{r}$ the solutions of the scalar system
\begin{equation*}
     \left\{
     \begin{aligned}
\frac{ds}{dt} &= - \tilde\beta si  \\
\frac{di}{dt} &= \tilde\beta si - \tilde\delta i  \\
\frac{dr}{dt} &= \tilde\delta i,
     \end{aligned}
     \right. 
\end{equation*}
with $s(0)=\sum_\node S_\node(0)$, and likewise for $i$ and $r$, and with
\begin{equation*}
\tilde\beta = \sum_\node \beta_\node \statiod(\node)^2, \quad \text{and} \quad
\tilde\delta = \sum_\node \delta_\node \statiod(\node).
\end{equation*}
Then, for any $T>0$, $\flot^\tau(\modl) \to \tpl{s\statiod}{i\statiod}{r\statiod}$, as $\to 0$, uniformly on $[0,T]$.
\end{lemma}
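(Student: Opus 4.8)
The plan is to treat this as a singular perturbation (fast--slow) problem, exploiting that the initial condition already lies on the slow manifold, so that no boundary layer appears and convergence can be made uniform on all of $[0,T]$, including $t=0$. The fast direction is the rescaled diffusion $\mdpl/\tau$, whose flow contracts everything onto the line $\mathbb{R}\statiod$; the slow variables are the total masses $\sigma_S(t)=\vun_N^\top S(t)=\sum_\node S_\node(t)$, and likewise $\sigma_I,\sigma_R$. I would normalise $\statiod$ so that $\vun_N^\top\statiod=1$, consistent with the definition in \resec{standard_models}.

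First I would record the two structural facts that drive the argument. Since the columns of $\mdpl$ sum to zero, $\vun_N^\top\mdpl=0$, so contracting the equations of \reeq{sir-metapop} (with $\mdpl$ replaced by $\mdpl/\tau$) against $\vun_N$ kills the diffusion terms and yields the closed system $\dot\sigma_S=-\sum_\node\beta_\node S_\node I_\node$, $\dot\sigma_I=\sum_\node\beta_\node S_\node I_\node-\sum_\node\delta_\node I_\node$, $\dot\sigma_R=\sum_\node\delta_\node I_\node$, with no explicit dependence on $\tau$. Secondly, on the transverse subspace $V=\ker(\vun_N^\top)=\{x:\sum_\node x_\node=0\}$, which is $\mdpl$-invariant, the restriction of $\mdpl$ is Hurwitz: $e^{t\mdpl}$ is nonnegative and column-stochastic, so by the Perron--Frobenius theory already invoked in \resec{standard_models} its only modulus-one eigenvalue is simple and carried by $\statiod$, whence there are constants $C,\lambda>0$ with $\nrm{e^{t\mdpl}v}\leq Ce^{-\lambda t}\nrm{v}$ for all $v\in V$.

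Next I would decompose $S=\sigma_S\statiod+\tilde S$ with $\tilde S=QS$, where $Q=\id-\statiod\vun_N^\top$ is the projection onto $V$ along $\statiod$, and likewise for $I$ and $R$. Using $Q\mdpl=\mdpl$ and $\mdpl\statiod=0$, the transverse part obeys $\dot{\tilde S}=\tfrac1\tau\mdpl\tilde S-Q(\vecbeta S\odot I)$, a linear equation in $\tilde S$ forced by a uniformly bounded term, since the trajectories stay in the compact set cut out by nonnegativity and conservation of the total population, uniformly in $\tau$. Because the hypothesis $S_0,I_0,R_0\propto\statiod$ gives $\tilde S(0)=0$, variation of constants yields $\tilde S(t)=-\int_0^t e^{((t-u)/\tau)\mdpl}Q(\vecbeta S\odot I)(u)\,du$, and since the integrand lives in $V$ the exponential bound above gives $\nrm{\tilde S(t)}\leq CM\tau/\lambda=O(\tau)$ uniformly on $[0,T]$, and similarly $\nrm{\tilde I},\nrm{\tilde R}=O(\tau)$.

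Finally I would substitute $S_\node=\sigma_S\statiod(\node)+\tilde S_\node$ and its analogues into the closed slow system. The leading reaction terms collapse exactly to the claimed coefficients, e.g.\ $\sum_\node\beta_\node S_\node I_\node=\sigma_S\sigma_I\sum_\node\beta_\node\statiod(\node)^2+O(\tau)=\tilde\beta\,\sigma_S\sigma_I+O(\tau)$ and $\sum_\node\delta_\node I_\node=\tilde\delta\,\sigma_I+O(\tau)$, so $(\sigma_S,\sigma_I,\sigma_R)$ solves the scalar SIR system up to an $O(\tau)$ forcing, with matching initial data $\sigma_S(0)=\sum_\node S_\node(0)$, and likewise for $\sigma_I,\sigma_R$. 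A Gr\"onwall / continuous-dependence estimate then gives $\sigma_S\to s$, $\sigma_I\to i$, $\sigma_R\to r$ uniformly on $[0,T]$, and combining with the transverse bounds yields $\flot^\tau(\modl)\to\tpl{s\statiod}{i\statiod}{r\statiod}$ uniformly. I expect the main obstacle to be making the $O(\tau)$ control of the transverse component fully rigorous and uniform in $\tau$---in particular the uniform boundedness of the forcing and the $\tau$-independence of the decay constants---whereas the absence of a boundary layer, guaranteed by the slow-manifold initial condition $\tilde S(0)=0$, is exactly what upgrades the convergence to hold uniformly up to $t=0$.
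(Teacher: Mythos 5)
Your proof is correct, and its first half is essentially the paper's: the paper isolates your transverse estimate as a separate auxiliary lemma (``Trajectories Close to a Line for Infinitely Quick Diffusion''), which performs exactly your decomposition $S^\tau = \sigma_S\,\statiod + \tilde S$ along $\mathbb{R}\statiod \oplus \ker(1_N^T)$, derives the same forced linear equation $\frac{d\tilde S}{dt} = \frac{\mdpl}{\tau}\tilde S + \gamma_\tau$ with uniformly bounded forcing (using the same a priori bounds from nonnegativity and conservation of population), and bounds $\tilde S$ by variation of constants together with the exponential decay of $e^{u\mdpl}$ on the zero-sum hyperplane. The only difference there is quantitative: the paper splits the integral and obtains a soft bound $\sup_{0\leq t\leq T}\nrm{\nu_\tau(t)}\leq \eps(1+T)$ for $\tau$ small, whereas your computation $\int_0^t C e^{-\lambda(t-u)/\tau}M\,du \leq CM\tau/\lambda$ gives the cleaner and stronger $O(\tau)$ rate. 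Where you genuinely diverge is the slow part. The paper argues by compactness: it shows $(s^\tau, i^\tau, r^\tau)$ is uniformly bounded and equicontinuous, extracts a convergent subsequence by Arzel\`a--Ascoli, passes to the limit in the integral equation (the transverse bound killing the cross terms), and identifies every limit point as the unique solution of the scalar SIR system, so the whole family converges. You instead note that $(\sigma_S,\sigma_I,\sigma_R)$ satisfies the scalar system up to an $O(\tau)$ perturbation with exactly matching initial data, and conclude by Gr\"onwall. Both are complete; yours buys an explicit convergence rate $O(\tau e^{LT})$ on $[0,T]$ that the compactness route does not provide, at the price of writing out the Lipschitz estimate, while the paper's route stays qualitative and sidesteps that bookkeeping. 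One step to keep explicit when you write yours up: collapsing $\sum_\node \beta_\node S_\node I_\node$ to $\tilde\beta\,\sigma_S\sigma_I + O(\tau)$ uses the uniform boundedness of $\sigma_S,\sigma_I$ to control the cross terms such as $\sigma_S \sum_\node \beta_\node \statiod(\node)\tilde I_\node$; you have this from the invariant compact set, so the gap you flagged as the ``main obstacle'' is indeed closed by your own uniform bounds.
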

We illustrate \relem{limit_trajs_diffusion_rate_infty} on \refig{discrepancy_scalar_model}. We ran experiments according to the protocol described in \resec{exp_setup}, using \erdosrenyi and Relaxed Caveman graphs, for a range of values of $\tau$. For $\paren{t_p}$ the discretisation scheme used for the simulations, and $p_\mathrm{max}$ the number of $t_p$'s, we plot the error 
\begin{equation*}
\inv{N} \inv{\tmax} \sum_{p=1}^{p_\mathrm{max}} \nrm{\flot_{t_p}(\frac{\mdpl}{\tau}) - \tpl{s(t_p)\statiod}{i(t_p)\statiod}{r(t_p)\statiod}}_2^2 \paren{t_p-t_{p-1}}
\end{equation*}
between the trajectories obtained with the vector model, and those computed from the scalar model. We indeed see it goes to $0$, as $\tau \to 0$. Moreover, the discrepancy is bigger for the Relaxed Caveman graph, than for the \erdosrenyi one: indeed, the latter is more connected, and therefore there are much more exchanges between the nodes, so that it is closer to a kind of \gu{average} model, which the scalar limit is. 
\begin{figure}
    \centering
    \includegraphics[scale=.45]{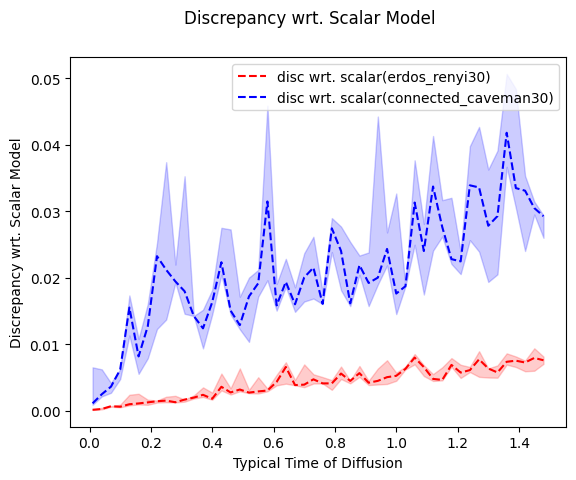}
    \caption{Discrepancy between the Scalar Model and the True Model as a Function of the Typical Time of Diffusion}
    \label{fig:discrepancy_scalar_model}
\end{figure}
Then, from \relem{limit_trajs_diffusion_rate_infty}, %and \relem{limit_trajs_diffusion_rate_zero}
we immediately have the following corollary which describes its consequences for the numerical rank of the observation matrix.
\begin{corollary}[Numerical Rank of the Observation Matrix for Diffusion Rate going to Infinity]
\label{cor:rank_diff_rate_infty}
We make the same assumptions as in \relem{limit_trajs_diffusion_rate_infty}.
Let, for some integer $K\geq 1$, $\paren{t_k}_{1\leq k \leq K}$ be a family of sample times. Let $\tau> 0$, and let us write $\obsmat\cpl{\paren{t_k}}{\flot^\tau(\mdpl)}$ the observation matrix associated with the $t_k$'s, and the flow $\flot^\tau(\mdpl)$. Then, the numerical rank of the matrix $\obsmat\cpl{\paren{t_k}}{\flot^\tau(\mdpl)}$ goes to $1$, as $\tau\to 0$.
\end{corollary}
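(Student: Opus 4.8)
The plan is to read off the limit of the observation matrix from \relem{limit_trajs_diffusion_rate_infty}, recognise that this limit is a rank-one matrix, and then conclude by the continuity of singular values under perturbation. First I would recall from \redef{observation_matrix} that the columns of $\obsmat\cpl{\paren{t_k}}{\flot^\tau(\mdpl)}$ are exactly the vectors $S^\tau(t_k)$, $I^\tau(t_k)$ and $R^\tau(t_k)$, for $1\leq k\leq K$, where $\tpl{S^\tau}{I^\tau}{R^\tau}=\flot^\tau(\modl)$. Applying \relem{limit_trajs_diffusion_rate_infty} with $T=\max_k t_k$, the uniform convergence on $[0,T]$ yields in particular pointwise convergence at each sample time, so that $S^\tau(t_k)\to s(t_k)\statiod$, and likewise for $I$ and $R$, as $\tau\to 0$. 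As there are only finitely many ($3K$) columns, the whole matrix converges, say in Frobenius norm, to the matrix $\obsmat^0$ whose columns are $s(t_k)\statiod$, $i(t_k)\statiod$ and $r(t_k)\statiod$.

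Next I would observe that $\obsmat^0$ factors as $\obsmat^0=\statiod\, v^\top$, where $v\in\mathbb{R}^{3K}$ is the vector gathering the scalars $s(t_k)$, $i(t_k)$, $r(t_k)$. Hence $\obsmat^0$ has rank at most one, and in fact exactly one: the stationary distribution $\statiod$, being a probability vector, is nonzero, and $v$ is nonzero because the underlying scalar SIR trajectory does not vanish identically. Consequently $\obsmat^0$ admits a single nonzero singular value $\sigma_1(\obsmat^0)>0$, while $\sigma_j(\obsmat^0)=0$ for all $j\geq 2$.

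Finally I would invoke the fact that singular values are $1$-Lipschitz with respect to the operator (or Frobenius) norm (Weyl's perturbation inequality for singular values), so that $\sigma_j\bigl(\obsmat\cpl{\paren{t_k}}{\flot^\tau(\mdpl)}\bigr)\to\sigma_j(\obsmat^0)$ as $\tau\to 0$, for every $j$. Thus the dominant singular value stays bounded away from zero, converging to $\sigma_1(\obsmat^0)>0$, whereas every subdominant singular value tends to zero; in particular the ratio of any subdominant singular value to the dominant one vanishes as $\tau\to 0$, which is exactly the statement that the numerical rank tends to $1$. I expect the only real subtlety to be a definitional one, namely fixing a precise notion of numerical rank (for instance, the number of singular values exceeding a prescribed fraction of the largest singular value); since the genuine analytic content is already carried by \relem{limit_trajs_diffusion_rate_infty}, once this definition is pinned down the corollary follows immediately from the singular-value convergence, so the argument is a soft continuity statement rather than a computation.
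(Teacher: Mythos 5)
Your proposal is correct and follows essentially the same route as the paper: apply \relem{limit_trajs_diffusion_rate_infty} on $[0,t_K]$ to write the observation matrix as the rank-one outer product $\statiod\otimes\paren{s(t_1),\ldots,r(t_K)}$ plus a perturbation vanishing as $\tau\to 0$, then conclude by continuity of the singular values (the paper likewise defines the numerical rank via a threshold on singular values). Your explicit appeal to Weyl's inequality and your check that the limit has rank exactly one merely spell out details the paper leaves implicit.
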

We illustrate this convergence on \refig{rank_diff_rate_infty}. We use the same protocol as for \refig{discrepancy_scalar_model}, but this time display the numerical rank. We see it gets lower and lower, as $\tau\to 0$. It is lower for the \erdosrenyi graph, probably for the same reasons given above. 
\begin{figure}
    \centering
    \includegraphics[scale=.45]{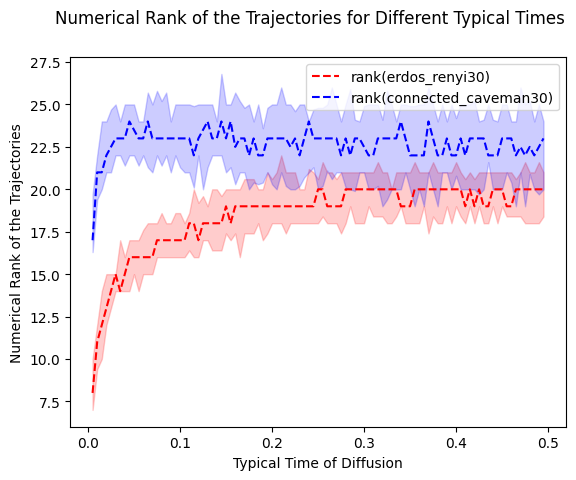}
    \caption{Numerical rank, of the observation matrix, for a fixed family of sample times, and two random graphs of $30$ nodes, for various diffusion rates $\tau$.}
    \label{fig:rank_diff_rate_infty}
\end{figure}
\subsection{Experiments: Diffusion Rate, Sampling Frequency}
\label{sec:rcn_diff_rate_sampling}
Let us now investigate the consequences of this phenomenon, for the practical reconstruction problem. We first show on \refig{full_obs_diffusion_rate_traj_errors} the AUC as a function of the typical time of diffusion $\tau$, for a fixed sampling rate. The AUC increases as the typical time of diffusion decreases, as we expected. It is bigger for the Relaxed Caveman graph, which has \gu{more structure} than the \erdosrenyi one.
\begin{figure}
    \centering
    \includegraphics[scale=.45]{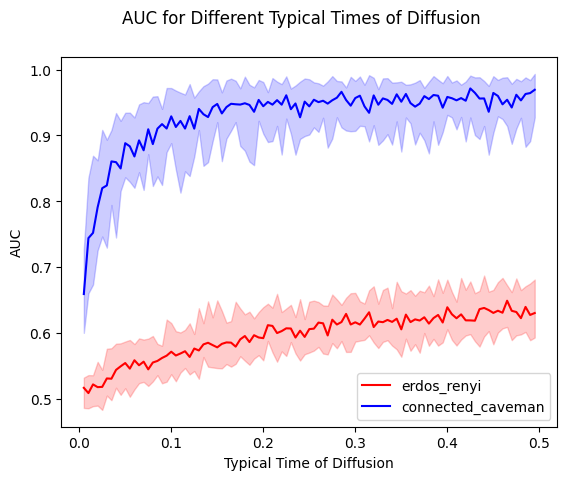}
    \caption{AUCs for Various Diffusion Rates}
    \label{fig:full_obs_diffusion_rate_traj_errors}
\end{figure}
Then, we study how increased sampling may help prediction for high diffusion rates. We therefore ran experiments for different values of $\tau$, and different sampling rates.
On \refig{diff_rate_sample_step_heatmap}, we show a heatmap of the AUC, with different sampling steps, and diffusion rates, for an \erdosrenyi graph of $30$ nodes. The darker the color, the smaller the AUC is. On each row, we see colors get darker as we go to the right: this means that, for each fixed diffusion rate, the AUC deteriorates as the sampling step increases. On each column, we see colors get darker as we move to the top: this means that, for each sampling step, the AUC worsens as the diffusion rate increases. Overall, we see that the bottom left triangle is lighter (sampling is high enough with respect to the diffusion rate, AUCs are big), while the top right triangle is darker (sampling is low with respect to the diffusion rate, AUCs are lower).

\begin{figure}
    \centering
    \includegraphics[scale=.45]{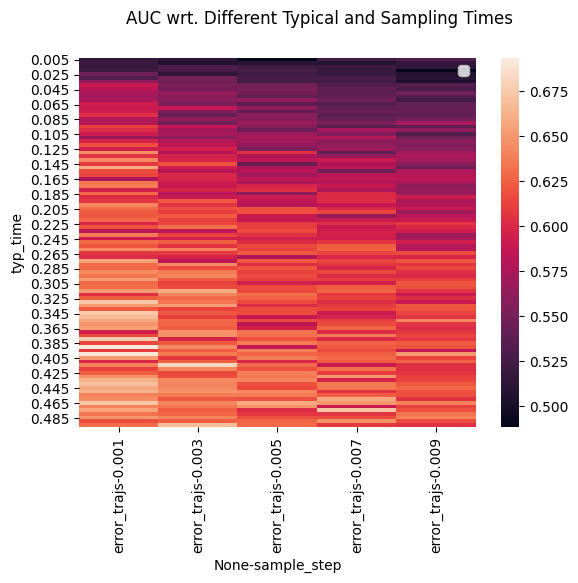}
    \caption{Heatmap of AUCs wrt. Typical Time of Diffusion and Sampling Step}
    \label{fig:diff_rate_sample_step_heatmap}
\end{figure}

\section{Influence of the Network Topology}
\label{sec:inf_topology}
We now study the influence of the network topology, first theoretically, from an algebraic standpoint (\resec{symmetries}), then experimentally (\resec{rcn_topology}).
\subsection{Symmetries}
\label{sec:symmetries}

Thanks to \relem{set_same_traj} we know that identifiability of $\mdpl$ is linked to the dimension of the vector space spanned by the flow $\flot(\modl)$. Now, symmetries of the model may cause the trajectories to live in low dimensional spaces. Indeed, they often lower the dimensions of the studied spaces by eliminating the dependencies of equations in some variables. For instance, a 2 dimensional problem in physics which is invariant under rotations around the origin will have a solution which will only depend on the distance to the origin. These principles have been applied successfully to numerous fields, and have been used in the context of mathematical epidemiology \cite{simon2011, ward2019} to reduce the number of calculations needed to simulate the propagation of diseases. We now investigate the influence of symmetries on the inverse problem we study.

Let us first define precisely symmetries. We write $\mathfrak{S}_N$ the symmetric group of order $N$, and $\sigma$ its elements, which are called permutations. We write $P(\sigma)$ the permutation matrix associated to the permutation $\sigma$. Then, for any $V\in\mathbb{R}^N$, $P(\sigma)V=V$ if, and only if, for every orbit of $\sigma$, for every $i,j$ in this orbit, we have $V_i=V_j$. A vector $X=\tpl{S}{I}{R}$ is symmetric with respect to $\sigma$ if $P(\sigma)S=S$, and likewise for $I$ and $R$. This notion extends to groups of permutations, as follows.
Let $\sglin$ be a subgroup of $\mathfrak{S}_N$, and define $\fix(\sglin)$ \cite{lang2012algebra} the space of vectors stable by $\sglin$, that is:
\begin{equation*}
\fix(\sglin) = \enstq{V \in \mathbb{R}^N}{\forall \sigma \in \sglin, P(\sigma)V = V}.    
\end{equation*}
Then, $X$ is said to be symmetric with respect to $\sglin$ if $X\in\fix(\sglin)$.
This extends to flows by saying that a flow $\tpl{S}{I}{R}=\flot(\modl)$ is symmetric with respect to $\sglin$ if $\flot(\modl)\in\fix(\sglin)$.
Finally, we say that a model $\modl=\tpl{\mdpl}{\cpl{\vecbeta}{\vecdelta}}{X_0}$ is symmetric with respect to some permutation $\sigma$ if, writing $P=P(\sigma)$, we have $\mdpl=P\mdpl P^{-1}$, $P\vecbeta=\vecbeta$, $P\vecdelta=\vecdelta$ and $PX_0=X_0$.  We then say $\sigma$ is an automorphism of $\modl$, extending in a straigthforward way the notion of graph automorphism \cite{hell2004graphs}. 
Indeed, if $\sigma$ is an automorphism of $\modl$, then it is in particular an automorphism of the underlying weighted graph, meaning that for all nodes $i, j \in \nods$, the edges $i\leadsto j$ and $\sigma(i)\leadsto\sigma(j)$ have the same weight: $\mdpl_{i, j} = \mdpl_{\sigma(i), \sigma(j)}$.
We write $\aut(\modl)$ the group of model automorphisms of $\modl$.

We first establish, in \relem{classification}, that trajectories generated by a diffusion $\mdpl$ are symmetrical with respect to some group $\sglin$ first if, and only if, there exists a diffusion admitting all permutations in $\sglin$ as automorphisms which generates the same trajectories and secondly if, and only if, the diffusion $\mdpl$ stabilizes $\fix(\sglin)$.
\begin{lemma}[Networks Generating Symmetrical Trajectories]
\label{lem:classification}
Let $\mdpl$ be a diffusion matrix, $\vecbeta, \vecdelta$ be the vectors of epidemiological coefficients, and $\sglin$ be a subgroup of $\mathfrak{S}_N$. 
Assume that $\vecbeta$ and $\vecdelta$ are symmetric with respect to $\sglin$. Then, the following conditions are equivalent.
\begin{enumerate}
    \item Symmetries of the Trajectories. For all $X_0\in\fix(\sglin)$, the flow of $\tpl{\mdpl}{\cpl{\vecbeta}{\vecdelta}}{X_0}$ is symmetric with respect to $\sglin$. 
    \item Symmetrical Generating Diffusion. There exists a diffusion matrix $\overline{\mdpl}$ such that $\sglin \subset \aut\tpl{\overline\mdpl}{(\vecbeta, \vecdelta)}{X_0}$ and such that for all $X_0\in\fix(\sglin)$, the flow of $\tpl{\overline\mdpl}{(\vecbeta, \vecdelta)}{X_0}$ equals the flow of $\tpl{\mdpl}{(\vecbeta, \vecdelta)}{X_0}$. 
    \item Stabilization by the Diffusion. $\mdpl$ stabilizes $\fix(\sglin)$, that is $\mdpl\fix(\sglin) \subset \fix(\sglin)$. 
\end{enumerate}
\end{lemma}

We now show that, in the spirit of \relem{set_same_traj}, trajectories symmetrical with respect to $\sglin$ are generated by diffusions which differ by a matrix $Z$ vanishing on $\fix(\sglin)$. These matrices represent the fact flows between nodes with identical $S$, $I$ and $R$ values may be redirected freely within themselves, provided the outgoing flows are modified accordingly\footnote{In fact, $Z$ matrices, like the diffusion matrices, describe rates. However, as long as nodes have equal values, modifying the rates, or the flows going out of them, becomes equivalent.}. The nodes where the flows are identical are those in the same orbits under $\sglin$ \cite{lang2012algebra}, that is the nodes $i$ and $j$ such that, for some $\sigma\in\sglin$, we have $j=\sigma(i)$.
Define therefore, for all $1\leq i < j \leq N$, and for all $1 \leq k \leq N-1$, the redirection matrix
\begin{equation*}
Z^{i,j,k}=E_{k,i} - E_{k,j} - E_{N,i} + E_{N,j},
\end{equation*}
where the $E_{r,s}$'s matrices are the vectors of the canonical basis of $\mathcal{M}_N\paren{\mathbb{R}}$. This matrix removes one unit of rate from the edge $j\leadsto k$, and adds one unit of rate on the edge $i\leadsto k$. It does the reverse with respect to the node $N$, taking one unit of rate from $i\leadsto N$ and adding it to $j\leadsto N$, in order to enforce the fact that the sums of $Z^{i,j,k}$ vanish, that is as much rate goes to each node than goes out.
\begin{lemma}[Flow Redirection within the Orbits]
\label{cor:synthesis}

Under the same assumptions as in \relem{classification}, let $\sglin$ be the biggest group of symmetries letting invariant the trajectories.
Then, the affine space of matrices producing the same trajectories as $\mdpl$ for every initial condition $X_0 \in \fix(\sglin)$ is exactly the subspace generated by the $Z^{i, j, k}$'s, for all $i$ and $j$ which are in the same orbit under $\sglin$. This space has dimension at least
\begin{equation*}
(N-1) (N - \# \{\text{different trajectories}\}).    
\end{equation*}
\end{lemma}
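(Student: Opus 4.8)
The plan is to reduce the statement to a linear‑algebra identity about the admissible increments $H=\mdpl'-\mdpl$, and then to exhibit the $Z^{i,j,k}$'s as a spanning family of the resulting space.

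First I would translate the requirement \emph{same trajectories for every $X_0\in\fix(\sglin)$} into a condition on $H$. By the uniqueness‑of‑solutions argument underlying \relem{set_same_traj}, for a fixed initial condition the matrices $\mdpl$ and $\mdpl+H$ generate the same flow if and only if $\flot_t(\mdpl)\in\ker H$ for all $t$ (indeed $\flot_t(\mdpl)$ then also solves the $\mdpl+H$ dynamics). Quantifying over all $X_0\in\fix(\sglin)$: the initial data $S_0,I_0,R_0$ alone already range over all of $\fix(\sglin)$, forcing $\fix(\sglin)\subseteq\ker H$; conversely, since $\sglin$ preserves the trajectories, the stabilization property (the equivalence in \relem{classification}) gives $\mdpl\fix(\sglin)\subseteq\fix(\sglin)$, so the whole flow stays in $\fix(\sglin)\subseteq\ker H$ and the condition is met. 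As $\mdpl$ and $\mdpl+H$ are both diffusion matrices, $H$ must also have vanishing column sums. Hence the admissible increments form the linear space
\[ V = \{H : \fix(\sglin)\subseteq\ker H,\ \tra{\vun}\,H = 0\}, \]
and the affine space in the statement is $\mdpl+V$.

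Second, I would read $V$ off coordinate‑wise. The condition $\fix(\sglin)\subseteq\ker H$ says every row of $H$ is orthogonal to $\fix(\sglin)$, hence lies in $U:=\fix(\sglin)^\perp$, the space of vectors whose coordinates sum to zero on each orbit of $\sglin$; the vanishing column sums say every column of $H$ lies in $W_0:=\{x:\tra{\vun}\,x=0\}$. Thus $V=W_0\otimes U$. Next I would check that the $Z^{i,j,k}$'s land in, and span, this space: a direct computation shows that $Z^{i,j,k}$ is the rank‑one matrix $\paren{\vcan_k-\vcan_N}\tra{\paren{\vcan_i-\vcan_j}}$, whose column space is generated by $\vcan_k-\vcan_N\in W_0$ and whose row space is generated by $\vcan_i-\vcan_j$, the latter lying in $U$ precisely when $i$ and $j$ are in the same orbit (so that $v_i=v_j$ for every $v\in\fix(\sglin)$). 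As $k$ ranges over $1,\dots,N-1$ the vectors $\vcan_k-\vcan_N$ form a basis of $W_0$, and as $i\sim j$ vary the differences $\vcan_i-\vcan_j$ span $U$; the outer products of two spanning families span the tensor product, so $\mathrm{span}(Z^{i,j,k}:i\sim j)=W_0\otimes U=V$. This establishes that the affine space equals $\mdpl+\mathrm{span}(Z^{i,j,k})$.

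Finally, for the dimension I would use $\dim U=\sum_{\text{orbits }O}(\abs{O}-1)=N-m$, where $m$ is the number of orbits of $\sglin$, giving $\dim V=(N-1)(N-m)$. It then remains to identify $m$ with the number of distinct trajectories: nodes in one orbit share a trajectory, so this number is at most $m$; conversely, since $\sglin$ is the \emph{largest} trajectory‑preserving group, any two nodes with identical trajectories are swapped by a transposition that fixes every trajectory vector and hence belongs to $\sglin$, placing them in a single orbit. Therefore $m=\#\{\text{different trajectories}\}$ and $\dim V=(N-1)(N-\#\{\text{different trajectories}\})$, which yields the claimed bound. The main obstacle I anticipate is the reduction of the first step — correctly handling the \emph{for every $X_0\in\fix(\sglin)$} quantifier, in particular arguing that the trajectories collectively exhaust $\fix(\sglin)$ so that no $H\notin V$ survives — together with the clean identification $m=\#\{\text{different trajectories}\}$, which is exactly where the maximality of $\sglin$ is needed.
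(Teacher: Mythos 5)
Your proof is correct, and its core linear-algebra step takes a genuinely different route from the paper's. The reduction coincides with the paper's: evaluating both dynamics along a common trajectory and specialising to $t=0$ over all $X_0\in\fix(\sglin)$ forces $\fix(\sglin)\subseteq\ker H$, with zero column sums coming from both matrices being diffusion matrices; you are in fact more complete here, since you also establish the reverse inclusion (any such $H$ yields the same trajectories, because the flow remains in $\fix(\sglin)$ by \relem{classification}, so \relem{set_same_traj} applies), a point the paper leaves implicit, and you prove rather than assert the identification $\#\{\text{orbits}\}=\#\{\text{different trajectories}\}$ via the maximality of $\sglin$ (a transposition exchanging two nodes with identical trajectories fixes every trajectory vector, hence lies in $\sglin$). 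The divergence is in identifying $V=\enstq{H}{\fix(\sglin)\subseteq\ker H,\ \tra{\vun}H=0}$ with the span of the $Z^{i,j,k}$'s: the paper's \relem{mat_vanish_fixsglin} does this by a hands-on triangular elimination, subtracting multiples of the consecutive-pair matrices $Z^{i_l,i_{l+1},k}$ orbit by orbit and row by row to annihilate a given $Z$, using the vanishing column sums to clear the last row; you instead observe that $Z^{i,j,k}=\paren{\vcan_k-\vcan_N}\tra{\paren{\vcan_i-\vcan_j}}$ is rank one and identify $V$ with the tensor product $W_0\otimes U$, where $W_0=\vun^{\perp}$ and $U=\fix(\sglin)^{\perp}$, so that spanning follows from the general fact that outer products of spanning families span a tensor product of subspaces. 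Your route is shorter and structurally cleaner, and it yields the dimension for free as $\dim W_0\cdot\dim U=(N-1)(N-m)$, $m$ being the number of orbits — which, incidentally, is the correct value: the formula $(N-1)\,\#\{\text{orbits}\}$ displayed in \relem{mat_vanish_fixsglin} is inconsistent with the bound $(N-1)\paren{N-\#\{\text{different trajectories}\}}$ in the statement, and your computation confirms the latter. What the paper's elimination buys in exchange is an entirely elementary, self-contained argument that exhibits an explicit independent subfamily (the $Z^{i_l,i_{l+1},k}$'s) rather than invoking tensor-product facts.
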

The dimension of this space is a lower bound on the dimension of the affine space of matrices generating the same trajectories as $\mdpl$.
To summarize, given a diffusion matrix $\mdpl$, we have given an explicit description of a set of matrices giving the same trajectories as $\mdpl$. As a result, if the diffusion matrix we try to reconstruct gives symmetrical trajectories, and if we have an algorithm which gives us one solution of the reconstruction problem, then we are able to find many such matrices explicitly, though we cannot single the original $\mdpl$ out.
Note that this has consequences on the conditioning of the observation matrix. Indeed, its rank is then necessarily bounded by $\# \{\text{different  trajectories}\}$. As such, if the model presents symmetries, then several singular values of the observation matrix will be zero, and in a neighbourhood of $\mdpl$ as well, the numerical rank will be bounded by $N - \# \{\text{different  trajectories}\}$. This proves that the nearest a model is to a symmetrical model, the most difficult it is to reconstruct the diffusion matrix.

\subsection{Network Topology Experiments}
\label{sec:rcn_topology}
We now study experimentally the influence of the network topology on the estimation and prediction problems.
As explained in \resec{exp_setup}, we study two metrics: the AUC on the presence or absence of edges, and the prediction error, and we present results for various sizes of graphs, and various types of random networks, exhibiting different topologies. The plots are box plots, where the solid lines are the medians of values, and the shaded areas gather the [10\%, 90\%] intervals of values.

We present, on \refig{full_obs_auc}, the AUC as a function of the number of nodes, for various types of random graphs. The full set of parameters used is available in the code. The AUC is quite good for small graphs, more than $0.8$, but decreases as the number of nodes increases. As expected, the more constrained the topology, the better the AUC. Indeed, it is in general best for the \brbalbert graph, and the second best is often the Relaxed Caveman graph. The Waxman gaph, and above all the \erdosrenyi one, exhibit the worse AUCs.  

On \refig{full_obs_traj_errors}, we show the prediction error.
We see the prediction errors in general are quite low, less than $2\times 10^{-4}$, and diminish with the number of nodes. Moreover, the \erdosrenyi graph consistently exhibits the lowest error. These results are consistent with each other, in the sense that it seems the more the graph has connections, the easiest it is to predict the future behaviour of the system (more edges, either through more nodes, or through the topology, in the case of the \erdosrenyi graph). However, they are opposite to the results for the AUCs. Therefore, they tend to suggest that the more constrained the topology, the easier it is to reconstruct the network, but the more mixing there is, the easiest it is to predict the future evolution of the system.
We did not display the prediction errors for the \brbalbert graph,  as it was about 5 times greater than for the other graphs, and exhibited also high variance. We believe it comes probably first from the fact it is proner to numerical instabilities, as we said in \resec{exp_setup}. Secondly, it is also probably due to its topology being more constrained: as a result, small errors on the reconstruction lead to high errors on the prediction.

Finally, on \refig{full_obs_num_ranks}, we see the numerical rank of the observation matrix for several graphs. It tends to stagnate or decrease as the number of nodes increases, which is not surprising, as large matrices tend to have small singular values, which therefore do not contribute to the numerical rank. It is consistently higher for the \brbalbert graph, which structure is more constrained.
\begin{figure}
    \centering
    \includegraphics[scale=.45]{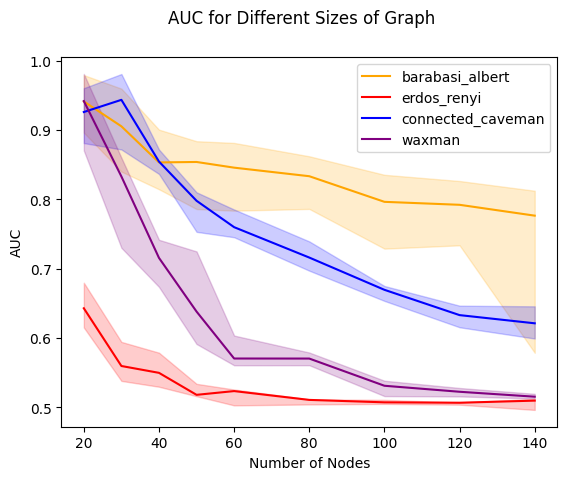}
    \caption{AUC on the Adjacency Matrix for Various Graphs and Sizes of Graphs}
    \label{fig:full_obs_auc}
\end{figure}
\begin{figure}
    \centering
    \includegraphics[scale=.45]{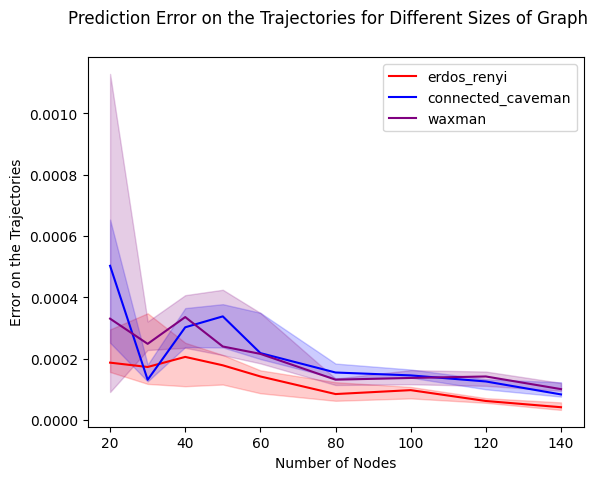}
    \caption{Prediction Error on the Trajectories for Various Sizes of Graphs}
    \label{fig:full_obs_traj_errors}
\end{figure}
\begin{figure}
    \centering
    \includegraphics[scale=.45]{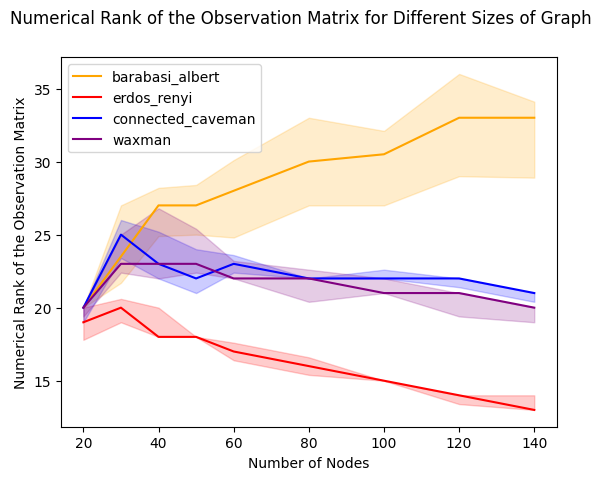}
    \caption{Numerical Ranks of the Observation Matrix for Various Sizes of Graphs}
    \label{fig:full_obs_num_ranks}
\end{figure}

\section{Conclusions, future works}
\label{sec:ccl_future_works}
In this article, we have studied the reconstruction, and prediction, problems, for an epidemic reaction-diffusion. We have proved that for almost every network, the reconstruction problem is identifiable. Then, we have shown that the quicker the diffusion, the lower the numerical rank of the observation matrix, and the harder the reconstruction, but that increasing sampling helped reconstruct the network.
Then, we have classified symmetrical networks generating the same trajectories. Finally, we showed experimentally, on synthetic data constructed with random graph generators exhibiting different topologies, that reconstruction was easier for more \gu{constrained} topologies, and that the prediction problem could still be solved satisfyingly even when the network topology makes exact reconstruction difficult.

We have studied the case when the observations we have consist of all the trajectories in all the nodes. Other studies have considered partial observations, as in the presence of missing nodes \cite{tyrcha2014, haehne2019}, or more generally partial observations \cite{nitzan2017, ioannidis2018, ioannidis2019}. This would be an interesting extension to further our work. 

Then, measures often come with a level of noise, and studying the robustness of reconstruction in the presence of noise should be another direction of study.

Finally, studies on real-world data would allow to attempt at separating the specific influence of diffusion, through transportation networks for instance, on the spread of epidemics, from that of internal (within cities, or countries) reactions.

\printbibliography

\newpage
\appendix

\section{Proofs for \resec{identifiability}, Identifiability of the Diffusion}
\label{app:model_problems}
\begin{proof}[\relem{set_same_traj}]
Let $\mdpl=\mdpl^*+H$ be a diffusion matrix, such that for all $t\geq 0$, we have $S(t)\in\ker(H)$, $I(t)\in\ker(H)$ and $R(t)\in\ker(H)$. Let us check $\tpl{S}{I}{R}=\flot(\mdpl^*)$ is then a solution of \reeq{sir-metapop} with diffusion matrix equal to $\mdpl$. Indeed, for all $t\geq 0$, we have
\begin{equation*}
\ba
\frac{dS}{dt}(t) &= -\vecbeta S(t) \odot I(t) + \mdpl^* S(t) \quad \text{by assumption}\\
&= -\vecbeta S(t) \odot I(t) + \mdpl^* S(t) + H S(t) \quad \text{as} \, S(t)\in\ker H \\
&= -\vecbeta S(t) \odot I(t) + \mdpl S(t).
\ea
\end{equation*}
Likewise, $I$ and $R$ satisfy the corresponding equations, with diffusion matrix $\mdpl$ instead of $\mdpl^*$. Now, $\tpl{S}{I}{R}$ satisfies the differential equation \reeq{sir-metapop} with diffusion matrix $\mdpl$, and starts at the initial condition $X_0$. Therefore, by unicity of the solutions of this system, we have $\tpl{S}{I}{R}=\flot(\mdpl)$.
\end{proof}

To prove \relem{traj_oft_whole_space}, we first prove that the trajectories generated by a \emph{linear} system often generate the whole space. Before doing this, we need the following technical lemmas.
\begin{lemma}[Technical Result]
\label{lem:tech_lem}
For every integer $N\geq 1$, let $\paren{\lambda_\node}_{1 \leq \node \leq N}$ and $\paren{t_k}_{1 \leq k \leq N}$ be two families of distinct (within each family) real numbers. Then, for every $N \geq 1$, the following property holds true:
\begin{multline*}
\mathcal{P}(N): \quad \forall 1 \leq \node \leq N, \sum_{k=1}^N \mu_k e^{\lambda_\node t_k} = 0
\Rightarrow (\forall 1 \leq k \leq N, \mu_k = 0).
\end{multline*}
\end{lemma}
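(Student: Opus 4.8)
The plan is to recast the statement as a bound on the number of real zeros of a single auxiliary function. Fix coefficients $\mu_1,\dots,\mu_N$ and introduce the real-variable function
\[
f(x) = \sum_{k=1}^N \mu_k e^{t_k x},
\]
so that the hypothesis ``$\sum_{k=1}^N \mu_k e^{\lambda_n t_k}=0$ for all $n$'' says exactly that $f$ vanishes at the $N$ distinct points $\lambda_1,\dots,\lambda_N$. Thus $\mathcal{P}(N)$ is equivalent to the assertion that such an $f$, built from distinct exponents $t_k$, cannot have $N$ distinct real zeros unless all $\mu_k$ vanish. This reduces the lemma to a classical fact about exponential sums, which decouples the two roles played by the $\lambda_n$'s (evaluation points) and the $t_k$'s (exponents).

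The heart of the argument is therefore the following claim, proved by induction on the number of terms: if $t_1<\cdots<t_N$ are distinct and the $\mu_k$ are not all zero, then $f$ has at most $N-1$ distinct real zeros. The base case $N=1$ is immediate, since $\mu_1 e^{t_1 x}$ never vanishes when $\mu_1\neq 0$. For the inductive step I factor out the strictly positive function $e^{t_1 x}$, writing $f(x)=e^{t_1 x}g(x)$ with
\[
g(x) = \mu_1 + \sum_{k=2}^N \mu_k e^{(t_k - t_1)x},
\]
so that $f$ and $g$ have the same zeros. Differentiating annihilates the constant term, leaving $g'(x)=\sum_{k=2}^N \mu_k (t_k-t_1) e^{(t_k-t_1)x}$, an exponential sum with $N-1$ terms whose exponents $t_k-t_1$ are still pairwise distinct. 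If these reduced coefficients are not all zero, the induction hypothesis gives $g'$ at most $N-2$ distinct zeros; since Rolle's theorem places a zero of $g'$ strictly between any two consecutive zeros of $g$, the function $g$ (hence $f$) has at most $N-1$ distinct zeros. Combining this with the reformulation finishes the proof: under $\mathcal{P}(N)$ the function $f$ has the $N$ distinct zeros $\lambda_1,\dots,\lambda_N$, which by the counting bound is impossible unless every $\mu_k$ is zero.

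I expect the main obstacle to be bookkeeping rather than any deep idea. One must verify that after factoring and differentiating the reduced exponents $t_k-t_1$ are genuinely distinct and nonzero, so that the induction hypothesis applies, and one must dispose of the degenerate branch where $g'$ has all coefficients zero: since $t_k-t_1\neq 0$ this forces $\mu_k=0$ for $k\geq 2$, so $g$ reduces to the constant $\mu_1$, which is either a nonzero constant (no zeros, consistent with the bound) or identically zero (contradicting that not all $\mu_k$ vanish). The Rolle step also requires care that $m$ distinct zeros of $g$ yield at least $m-1$ distinct zeros of $g'$, so that $N$ zeros of $g$ would exceed the allowed $N-2$ for $g'$.

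As an alternative route, one could observe that $\mathcal{P}(N)$ is precisely the nonsingularity of the generalized Vandermonde matrix $\left(e^{\lambda_n t_k}\right)_{n,k}$, which follows from the strict total positivity of the kernel $(\lambda,t)\mapsto e^{\lambda t}$. I would nonetheless prefer the Rolle-based zero count above, as it is elementary, self-contained, and matches the inductive phrasing of $\mathcal{P}(N)$.
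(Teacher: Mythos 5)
Your proof is correct and follows essentially the same route as the paper's: both arguments view the hypothesis as saying that an exponential sum in the spectral variable has $N$ distinct real zeros, factor out one exponential to create a constant term, differentiate to remove it, apply Rolle's theorem, and induct on the number of terms. The only difference is organizational---you induct on the zero-counting bound (a nonzero $N$-term exponential sum has at most $N-1$ distinct real zeros) and deduce $\mathcal{P}(N)$ as a corollary, which costs you the extra degenerate-case check, whereas the paper inducts on $\mathcal{P}(N)$ directly; the mathematical content is identical.
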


\begin{proof}
We prove it by induction on $N$. The case $N = 1$ is immediate. Let $N \geq 2$, and let us show that if $\mathcal{P}(N-1)$ holds, then $\mathcal{P}(N)$ holds as well.
Assume that for all $1\leq \node \leq N$, we have $\sum_{k=1}^N \mu_k e^{\lambda_\node t_k} = 0$. Then, for all $1 \leq \node \leq N$, we have
\begin{equation*}
\sum_{k=1}^{N-1} \mu_k e^{\lambda_\node (t_k- t_N)} + \mu_N = 0.
\end{equation*}
Let us consider the mapping:
\begin{equation*}
g : \mathbb{R}\ni\lambda \mapsto \sum_{k=1}^{N-1} \mu_k e^{\lambda (t_k- t_N)} + \mu_N.
\end{equation*}
Then, $g$ has $N$ distinct roots (the $\lambda_\node$'s), therefore by Rolle's theorem, its derivative admits $N-1$ distinct roots.
We therefore obtain $N-1$ values $\lambda_\node'$ such that, for all $1\leq \node \leq N-1$, $\lambda_\node < \lambda_\node' < \lambda_{\node+1}$ and $\frac{dg}{d\lambda}(\lambda_\node')=0$.
As a result, for all $1\leq \node \leq N-1$, we have
\begin{equation*}
0 = \frac{dg}{d\lambda}(\lambda'_\node) = \sum_{k=1}^{N-1} \mu_k (t_k - t_N) e^{\lambda_\node (t_k- t_N)}.
\end{equation*}
By induction, for all $1 \leq k \leq N - 1$, we have $\mu_k (t_k - t_N) = 0$, and therefore $\mu_k = 0$, as the $t_k$'s are distinct. Then, $\mu_N = 0$ as well, hence the result.
\end{proof}
We need the following lemma. Though well-known, we could not locate a reference, so added it here for completeness.
\begin{lemma}[Almost Every Diffusion Matrices has Distinct Eigenvalues]
\label{lem:diffusion_matrix_distinct_eigenvalues}
Almost every diffusion matrix has $N$ distinct eigenvalues.
\end{lemma}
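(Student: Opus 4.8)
The plan is to view the diffusion matrices as a full‑dimensional region inside the linear space $V = \enstq{\mdpl \in \mathcal{M}_N(\mathbb{R})}{\tra{\vun}\,\mdpl = 0}$ of real matrices whose columns sum to zero, and to show that within $V$ the event \gu{having a repeated eigenvalue} is carried by the zero set of a nonzero polynomial, hence has Lebesgue measure zero. Natural coordinates on $V$ are the $N^2 - N$ off‑diagonal entries $\mdpl_{ij}$ ($i \neq j$), the diagonal being fixed by $\mdpl_{jj} = -\sum_{i \neq j} \mdpl_{ij}$; a diffusion matrix is then such a point with nonnegative coordinates and (strongly) connected support graph. Since every matrix with strictly positive off‑diagonal coordinates is automatically irreducible, the diffusion matrices contain a nonempty open set of these coordinates, so they form a full‑dimensional subset and \gu{almost every} refers to the $(N^2-N)$‑dimensional Lebesgue measure in these coordinates.

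First I would recall that $\mdpl$ has $N$ distinct (complex) eigenvalues if and only if its characteristic polynomial $\chi_{\mdpl}(\lambda) = \det(\lambda\,\id - \mdpl)$ is separable, i.e. if and only if its discriminant $\Delta(\mdpl)$ is nonzero. The coefficients of $\chi_{\mdpl}$ are polynomials in the entries of $\mdpl$ (signed sums of principal minors), and the discriminant is itself a polynomial in those coefficients (up to sign, the resultant $\mathrm{Res}(\chi_{\mdpl}, \chi_{\mdpl}')$, as $\chi_{\mdpl}$ is monic). Composing with the affine parametrization of $V$, $\Delta$ becomes a polynomial $P$ in the $N^2 - N$ coordinates. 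I would then invoke the standard fact that a polynomial on $\mathbb{R}^m$ which is not identically zero vanishes only on a set of Lebesgue measure zero. It therefore suffices to prove $P \not\equiv 0$, that is, to exhibit one element of $V$ with $N$ distinct eigenvalues.

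This last step is the only genuine obstacle, because every diffusion matrix already has $0$ as an eigenvalue (with left eigenvector $\vun$), so one must produce the remaining $N - 1$ eigenvalues distinct and nonzero. I would do so with an explicit family valid for all $N$: let $A \in \mathcal{M}_N(\mathbb{R})$ be the lower bidiagonal matrix with diagonal entries $A_{jj} = -j$ for $1 \le j \le N-1$ and $A_{NN} = 0$, subdiagonal entries $A_{j+1,j} = j$ for $1 \le j \le N-1$, and all other entries zero. Every column sums to zero, so $A \in V$; being triangular, its eigenvalues are its diagonal entries $-1, -2, \ldots, -(N-1), 0$, which are pairwise distinct. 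Hence $\Delta(A) \neq 0$, so $P$ is not the zero polynomial. Note that $A$ itself need not be irreducible: for the argument it only matters that $\Delta$ be nonzero somewhere on $V$, which forces $\Delta$ to vanish on a measure‑zero subset of all of $V$, and a fortiori on a measure‑zero subset of the diffusion matrices. Putting the three steps together, the set of diffusion matrices with a repeated eigenvalue lies in the zero locus of the nonzero polynomial $P$ and thus has measure zero, which establishes the lemma.
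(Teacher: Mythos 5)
Your proof is correct, and it follows the same master plan as the paper's: identify the bad set with the zero locus of the discriminant of the characteristic polynomial, viewed as a polynomial function on the $(N^2-N)$-dimensional subspace $V$ of matrices with vanishing column sums, and conclude by a measure-zero argument. The two writeups differ in where they spend their effort. The paper does not invoke the standard fact that a nonzero polynomial on $\mathbb{R}^m$ has a null zero set; it reproves it in this setting by slicing $V$ along lines through a fixed matrix $D \notin \mathcal{E}$ (a one-variable polynomial that is nonzero at a point has finitely many roots) and then integrating with Fubini--Tonelli. In doing so, however, it never exhibits such a $D$: the existence of a column-sum-zero matrix with $N$ distinct eigenvalues is assumed implicitly when the proof says ``let us finally fix some $D \notin \mathcal{E}$''. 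Your proof cites the polynomial-zero-set fact as standard and instead supplies precisely this missing witness: the lower bidiagonal matrix with vanishing column sums and distinct diagonal entries $-1, -2, \ldots, -(N-1), 0$, which certifies that the discriminant is not identically zero on $V$. Your remark that the witness need not be irreducible (hence need not itself be a diffusion matrix) is exactly right, since only non-vanishing of the polynomial somewhere on $V$ is required, and the set of diffusion matrices is full-dimensional in $V$; the same remark would also repair the paper's implicit step, as its Fubini argument only needs $D \in V$ with nonzero discriminant. So your route is, if anything, more complete on the one genuinely non-formal point, at the price of outsourcing the measure theory to a standard lemma, whereas the paper is more self-contained on the measure-theoretic side but leaves the witness unproduced.
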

\begin{proof}
Let $\mathcal{E}$ be the subset of diffusion matrices which do not have $N$ distinct eigenvalues, and let us prove it has null measure. We know that $\mathcal{E}$ is the zero set of the discriminant $\Delta$ applied to the characteristic polynomial \cite{lang2012algebra}, defined for every matrix $A$ of order $N$ by
\begin{equation*}
F(A) = \Delta \circ \det(X \id - A).
\end{equation*}
The set of diffusion matrices is contained in the linear subspace of matrices $A$ satisfying $1_N^T A = 0$, which is of dimension $N^2-N$. We equip it with the standard Lebesgue measure over $\mathbb{R}^{N^2-N}$.

Then, for any diffusion matrix $D \not\in \mathcal{E}$, and any diffusion matrix $\mdpl$, $D + \lambda \mdpl$ is in $\mathcal{E}$ only for a finite number of $\lambda\in\mathbb{R}$.
Indeed, $\lambda \mapsto F(D+\lambda \mdpl) = \Delta \circ \det(X Id - D - \lambda \mdpl)$ is a polynomial function, so it is either zero or has a finite number of zeroes. But then it is nonzero at $\lambda = 0$, because $D \not\in \mathcal{E}$.
Therefore, the indicator function $\chi$ of $\mathcal{E}$ is zero almost everywhere on $\{D + \lambda \mdpl, \lambda \in \mathbb{R}\}$, so it is also zero almost everywhere on $\{\mdpl + \lambda D, \lambda \in \mathbb{R}\}$.

Let us finally fix some $D\notin\mathcal{E}$, and some vector space $\mathcal{H}$ such that $H \oplus \mathrm{Vect}(D) = \{A , 1_N^T A = 0\}$. Thanks to the Fubini-Tonelli theorem, we have
\begin{equation*}
\ba
\int_{\mathbb{R}^{N^2-N}} \chi(A) dA &= \int_\mathcal{H} \int_{\mathbb{R}} \chi(A + \lambda D) d\lambda dA\\
&= \int_\mathcal{H} 0 = 0.
\ea
\end{equation*}
Therefore, $\mathcal{E}$ has zero Lebesgue measure.
\end{proof}

\begin{lemma}[Almost Everywhere Identifiability, linear case]%[Trajectories Often Generate the Whole Space, linear case]
\label{lem:trajs_whole_space_linear}
Let $0 \leq t_1 < ... < t_N < \infty$ be a subdivision of the nonnegative real half-axis, $\mdpl \in \mathcal{M}_N(\mathbb{R})$, and $y_0 \in \mathbb{R}^N$. Let $y$ be the solution of the following differential equation:
\begin{equation}
\left\lbrace \ba
\frac{dy}{dt} &= \mdpl y \\
y(0) &= y_0.
\ea \right.
\end{equation}
Then, $(y(t_1), ..., y(t_N))$ is a basis of $\mathbb{R}^N$ for almost every $\mdpl, y_0$.
\end{lemma}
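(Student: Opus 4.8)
The plan is to recast the whole statement as a single analyticity argument. Writing the solution explicitly as $y(t)=e^{t\mdpl}y_0$, I would introduce the function
\[
\Psi\paren{\mdpl, y_0} \deq \det\crochets{\,e^{t_1\mdpl}y_0 \mid e^{t_2\mdpl}y_0 \mid \cdots \mid e^{t_N\mdpl}y_0\,},
\]
defined on $\mathcal{M}_N(\mathbb{R})\times\mathbb{R}^N\cong\mathbb{R}^{N^2+N}$. The tuple $\paren{y(t_1),\ldots,y(t_N)}$ fails to be a basis of $\mathbb{R}^N$ exactly on the zero set $\Psi^{-1}(0)$, so the claim is equivalent to this set being Lebesgue-negligible.

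I would first check that $\Psi$ is real-analytic. Each entry of $e^{t\mdpl}=\sum_{n\geq 0} t^n \mdpl^n/n!$ is an everywhere-convergent power series in the entries of $\mdpl$, hence an entire function of them; since $\Psi$ is obtained from these entries by the polynomial determinant and the linear dependence on $y_0$, it is real-analytic in $\paren{\mdpl, y_0}$. The tool I would then invoke is the standard fact that the zero set of a real-analytic function on a connected open subset of $\mathbb{R}^{N^2+N}$ is either the whole set or has Lebesgue measure zero. Hence it suffices to produce a single pair at which $\Psi\neq 0$.

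For the witness I would take $\mdpl$ diagonal with distinct real diagonal entries $\lambda_1,\ldots,\lambda_N$ and $y_0=1_N$. Then $e^{t_k\mdpl}y_0=\paren{e^{\lambda_1 t_k},\ldots,e^{\lambda_N t_k}}^T$, so the matrix inside the determinant has $(\node,k)$-entry $e^{\lambda_\node t_k}$. Its kernel is reduced to $\{0\}$ precisely by \relem{tech_lem}, which applies because the $\lambda_\node$ are distinct and the sample times $t_k$ are distinct: a vector $\mu$ in the kernel would satisfy $\sum_{k} \mu_k e^{\lambda_\node t_k}=0$ for every $\node$, forcing $\mu=0$. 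This matrix is therefore invertible and $\Psi\paren{\mdpl, y_0}\neq 0$, so $\Psi$ is not identically zero.

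The delicate point is the measure-zero statement itself: unlike the polynomial situation of \relem{diffusion_matrix_distinct_eigenvalues}, $\Psi$ is analytic but not polynomial, so the finite-root bound used there is unavailable and one genuinely needs the analytic version. I would either cite the known result on null sets of nonzero real-analytic functions, or reprove it along the lines of \relem{diffusion_matrix_distinct_eigenvalues} by an induction on dimension: restricting to lines, the one-variable restriction of $\Psi$ is analytic, hence either identically zero or with isolated (countable) zeros, and an inductive Fubini argument shows the set of base points producing an identically-zero restriction is itself negligible. Apart from this tool, every step is routine.
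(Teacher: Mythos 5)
Your proof is correct, and it takes a genuinely different route from the paper's. The paper first shows (\relem{diffusion_matrix_distinct_eigenvalues}) that almost every matrix has $N$ distinct eigenvalues, via a polynomial discriminant-plus-Fubini argument; it then takes any such $\mdpl$ together with any $y_0$ whose coordinates in an eigenbasis are all nonzero, expands $y(t_k)=\sum_\node e^{\lambda_\node t_k}y_\node(0)e_\node$, and applies \relem{tech_lem} to every such pair. You instead encode the failure of $\paren{y(t_1),\ldots,y(t_N)}$ to be a basis as the vanishing of the single real-analytic function $\Psi$, invoke the theorem that a nonzero real-analytic function on a connected open set has a Lebesgue-null zero set, and apply \relem{tech_lem} only once, at one explicit witness with real spectrum. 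What your route buys: you only need one example, and that example has real eigenvalues by construction, whereas the paper's eigenbasis step tacitly assumes the $\lambda_\node$ are real --- \relem{tech_lem}, whose proof uses Rolle's theorem and an ordering of the $\lambda_\node$'s, requires this. In fact, for real matrices with distinct but non-real eigenvalues the paper's sufficient condition can fail outright: for the $3\times 3$ block matrix $\mdpl$ with a $1\times1$ zero block and the $2\times 2$ generator of rotation by $2\pi$ (eigenvalues $0,\pm 2\pi i$, all distinct), sampled at integer times, one has $e^{t_k\mdpl}=\mathrm{Id}$ and hence $y(t_k)=y_0$ for every $k$ and every $y_0$; such resonant matrices form a null set, which your analytic argument absorbs automatically but the paper's reduction does not address. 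What your route costs: the zero-set theorem for real-analytic functions is a heavier (though standard and citable) tool than the purely polynomial arguments in the paper, and your Fubini-induction sketch for reproving it would still need the nontrivial step that the set of base points with identically vanishing restriction is negligible. One point to watch for the downstream use: \relem{traj_oft_whole_space} applies this statement to diffusion matrices, which form a null subset of $\mathcal{M}_N(\mathbb{R})$, so the relevant measure is the Lebesgue measure on the subspace $\enstq{A}{1_N^T A=0}$; your argument restricts verbatim to that subspace provided the witness is chosen inside it (e.g.\ the negative Laplacian of a path graph, a symmetric diffusion matrix with distinct real eigenvalues, for which the same exponential-matrix computation and \relem{tech_lem} apply).
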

\begin{proof}
Let us prove the result for $\mdpl$ and $y_0$ satisfying the additional assumptions that all the eigenvalues of $\mdpl$ have multiplicity $1$, and that every coordinate of $y_0$ in an eigenbasis of $\mdpl$ is nonzero (note that as $\mdpl$ has $N$ distinct eigenvalues, the associated subspaces are 1-dimensional, so the eigenbasis is unique up to permutation or scaling of the vectors). Thanks to \relem{diffusion_matrix_distinct_eigenvalues}, we will then have proved the result as stated, that is for almost every $\mdpl$, and also almost every $y_0$ (since the $y_0$'s with at least one 0 coordinate live in a union of $N$ hyperplanes $x_i = 0$ where $x_i$ is i-th the coordinate in the eigenbasis, which has zero Lebesgue measure.).

Since $y$ satisfies the linear equation $dy/dt = \mdpl y$, we know that, for all $t\geq 0$, we have $y(t) = \exp(t \mdpl) S_0$.
Let us write $\lambda_1, ..., \lambda_N$ the distinct eigenvalues of $\mdpl$, and $(e_{\node})$ a corresponding eigenbasis. For all $t\geq 0$, we can decompose $y(t)$ along this eigenbasis. Let us write $y_1(t),\ldots, y_N(t)$ the corresponding coefficients so that, for all $t\geq 0$, we have $S(t) = \sum_\node y_{\node}(t) e_{\node}$. By assumption, for every $\node$, we have $y_{\node}(0) \neq 0$. As a result, for every $t \geq 0$, we have
\begin{equation*}
\ba
    y(t) &= \exp\paren{\mdpl t} y_0
    = \exp\paren{\mdpl t} \sum_\node y_\node(0) e_\node \\
    &= \sum_\node \exp(tM) y_\node(0) e_\node
    = \sum_\node \exp(t\lambda_\node) y_\node(0) e_\node.
\ea
\end{equation*}
We want to show that $(y(t_k))_{1 \leq k \leq N}$ is a basis of $\mathbb{R}^N$. Let $\sum_k \mu_k y(t_k) = 0$ be a linear dependence relation. Since we have
\begin{equation*}
\ba
    \sum_{k=1}^N \mu_k y(t_k) &=
    \sum_{k=1}^N \mu_k \sum_\node e^{t_k \lambda_\node} y_\node(0) e_\node \\
    &= \sum_\node y_\node(0) e_\node \sum_{k=1}^N \mu_k e^{t_k\lambda_\node},
\ea
\end{equation*}
we know that, for all $\node \in \nods$, we have
\begin{equation*}
\sum_{k=1}^N \mu_k e^{t_k \lambda_\node}=0,
\end{equation*}
using the unicity of coordinates in the basis $(e_\node)_\node$ and the fact that for all node $\node$, we have $y_\node(0) \neq 0$. We use \relem{tech_lem} to conclude.
\end{proof}

We can now prove \relem{traj_oft_whole_space}.
\begin{proof} Let $K = S + I + R$ be the total population irrespective of infection status (for each node $\node$, $K_\node=S_\node+I_\node+R_\node$ is the population of node $\node$). By definition, for all $t \geq 0$, $K(t)$ is in the space generated by the trajectories. Then, $K$ follows the differential equation
\begin{equation*}
\ba
\frac{dK}{dt} 
&= \frac{dS}{dt} + \frac{dI}{dt} + \frac{dR}{dt} \\
&= - \beta S \odot I + \mdpl S + \beta S \odot I - \delta I + \mdpl I + \delta I + \mdpl R \\
&= \mdpl K.
\ea
\end{equation*}
Using \relem{trajs_whole_space_linear}, we see that $(K(t_1), ..., K(t_N))$ generates $\mathbb{R}^N$ for almost every $\mdpl$, and $K(0) = S_0 + I_0 + R_0$, so for almost every $\mdpl, S_0, I_0, R_0$. As a result, for almost every $\mdpl$, $X_0=\tpl{S_0}{I_0}{R_0}$, the space generated by the trajectories contains a family which generate $\mathbb{R}^N$. This proves our claim.
\end{proof}

\section{Proofs for \ref{sec:diff_rate_analysis}, analysis of the influence of the diffusion rate}

Let us first prove the following result.
\begin{lemma}[Trajectories Close to a Line for Infinitely Quick Diffusion]
\label{lem:traj_close_line_infty_quick_diffusion}
Let us assume the initial condition $X_0$ is such that $S_0$, $I_0$ and $R_0$ are proportional to the stationary distribution. %\py{discuss extension of result when this is not the case, namely the supremum is taken over some interval $[t(\tau), T]$ I think}

For every $\tau > 0$, let us write $\flot^\tau$ the solution of the system of \reeq{sir-metapop} where $\mdpl$ is replaced by $\mdpl/\tau$, that is, for every $t\geq 0$, $\flot_t^\tau = \tpl{S_\tau(t)}{I_\tau(t)}{R_\tau(t)}$.
Then, for all $T > 0$,
\begin{equation*}
\sup_{0 \leq t \leq T} \dist{S_\tau(t)}{\mathbb{R}\statiod} \to 0,
\end{equation*}
when $\tau\to\infty$, and likewise for $I_\tau$ and $R_\tau$.
\end{lemma}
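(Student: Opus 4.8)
The plan is to treat this as a singular-perturbation (fast--slow) problem, in which the fast diffusion term $\mdpl/\tau$ forces each compartment onto the line $\mathbb{R}\statiod$. First I would set up the spectral geometry of $\mdpl$. By the Perron--Frobenius theory already invoked in \resec{standard_models}, $0$ is a simple eigenvalue of $\mdpl$, with right null vector $\statiod$ and left null vector $1_N$ (the vanishing column sums give $1_N^{T}\mdpl=0$), and every other eigenvalue has strictly negative real part. Hence $\mathbb{R}^N=\mathbb{R}\statiod\oplus W$, where $W=\enstq{v\in\mathbb{R}^N}{1_N^{T}v=0}$ is $\mdpl$-invariant and the restriction $\mdpl|_W$ has spectrum in $\{\mathrm{Re}<0\}$. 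Writing $P_W$ for the (oblique) projection onto $W$ along $\statiod$, there exist constants $C\geq 1$ and $\mu>0$ such that $\nrmop{e^{t\mdpl}P_W}\leq C e^{-\mu t}$ for all $t\geq 0$. Since $\dist{S_\tau(t)}{\mathbb{R}\statiod}\leq\nrm{S_\tau(t)-\sigma_\tau(t)\statiod}=\nrm{P_W S_\tau(t)}$ for the scalar $\sigma_\tau(t)$ defined by the splitting, it suffices to control the transverse component $w_\tau(t)\deq P_W S_\tau(t)$.

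Next I would derive the evolution of $w_\tau$. Applying $P_W$ to the $S$-equation of \reeq{sir-metapop} (with $\mdpl$ replaced by $\mdpl/\tau$), and using that $P_W$ commutes with $\mdpl$ while $P_W\statiod=0$, gives
\begin{equation*}
\frac{dw_\tau}{dt}=\frac{1}{\tau}\mdpl w_\tau-P_W\paren{\vecbeta S_\tau\odot I_\tau}.
\end{equation*}
The crucial point is that the hypothesis $S_0\in\mathbb{R}\statiod$ yields $w_\tau(0)=0$: this is exactly what removes the initial boundary layer and makes the convergence uniform on all of $[0,T]$, including near $t=0$. Duhamel's formula then gives $w_\tau(t)=-\int_0^t e^{(t-u)\mdpl/\tau}P_W\paren{\vecbeta S_\tau(u)\odot I_\tau(u)}\,du$.

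Then I would close the estimate with the spectral gap. Standard positivity and conservation of the total population for \reeq{sir-metapop} (cited in \resec{standard_models}) keep every coordinate of $S_\tau,I_\tau,R_\tau$ in the fixed interval $[0,\,1_N^{T}(S_0+I_0+R_0)]$, uniformly in $\tau$ and $t$; hence the reaction term is bounded by a constant $M$ independent of $\tau$. Combining this with $\nrmop{e^{(t-u)\mdpl/\tau}P_W}\leq C e^{-\mu(t-u)/\tau}$ yields
\begin{equation*}
\nrm{w_\tau(t)}\leq C M\int_0^t e^{-\mu(t-u)/\tau}\,du\leq \frac{C M}{\mu}\,\tau,
\end{equation*}
so $\sup_{0\leq t\leq T}\nrm{w_\tau(t)}\to 0$ in the fast-diffusion limit $\tau\to0$ (the regime of \relem{limit_trajs_diffusion_rate_infty}). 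The same argument applies verbatim to $I_\tau$ and $R_\tau$, whose source terms $\vecbeta S_\tau\odot I_\tau-\vecdelta I_\tau$ and $\vecdelta I_\tau$ are likewise uniformly bounded and whose transverse components vanish at $t=0$.

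The main obstacle is not the damping estimate, which is routine once the spectral gap is in hand, but securing the two ingredients that make it uniform in $\tau$: first, the $\tau$-independent a priori bound $M$ on the reaction term, which rests on the invariance of the positive orthant and the conservation $1_N^{T}\mdpl=0$; and second, the vanishing of $w_\tau(0)$, without which one would only obtain pointwise (not uniform) convergence for $t>0$ and a genuine boundary layer at $t=0$. I would therefore record the positivity and boundedness of the flow as a preliminary, invoking the standard existence--positivity results already referenced, before running the Duhamel--Gr\"onwall argument above.
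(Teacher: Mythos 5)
Your proof is correct and follows essentially the same route as the paper's: the same splitting of $S_\tau$ into its component along $\mathbb{R}\statiod$ plus a transverse part in the zero-sum hyperplane, the same crucial observation that the proportional initial condition forces the transverse part to vanish at $t=0$, the same Duhamel representation, and the same uniform bound on the reaction term coming from positivity and conservation of the total population. The only (welcome) refinements are that you quantify the decay on the invariant complement by an explicit spectral-gap estimate $\nrmop{e^{t\mdpl}P_W}\leq C e^{-\mu t}$, which yields the rate $\go{\tau}$ directly, where the paper instead uses a softer threshold-plus-integral-splitting argument giving $\eps(1+T)$, and that you correctly identify the limit as $\tau\to 0$ (fast diffusion), silently fixing the typo \gu{$\tau\to\infty$} in the statement.
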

The results extends to cases when $X_0$ is not proportional to the stationary distribution, only taking the supremum over some interval $[t(\tau), T]$, where $t(\tau)$ tends to $0$, when $\tau\to 0$, and $t(\tau)$ represents the time it takes for the system to converge to the stationary distribution.
\begin{proof}
First, for all $\tau>0$, and $t\geq 0$, we have
\begin{equation}
\label{eq:nutau}
S_\tau(t) = \paren{\sum_\node S_{\tau,\node}(t)} \statiod + \nu_\tau(t),
\end{equation}
where $\nu_\tau(t)$ belongs to the set $\mathcal{H}=\enstq{\nu\in\mathbb{R}^N}{\sum_\node \nu_\node=0}$\footnote{This is a consequence of the decomposition $\mathbb{R}^N = \mathbb{R} \statiod \oplus \mathcal{H}$.}. Moreover, $\nu_\tau(t)$ is bounded uniformly in $\tau >0$ and $t\geq 0$, as all the $S_\tau(t)$'s are bounded by the total population, that is the sum of the coordinates of the initial condition $X_0$.

Then, for all $\tau >0$, $t\mapsto\nu_\tau(t)$ is differentiable thanks to \reeq{nutau} and, by differentiating \reeq{nutau}, we see $\nu_\tau$ satisfies a differential equation of the form:
\begin{equation*}
\frac{d \nu_\tau}{dt}(t) = \frac{\mdpl}{\tau}\nu_\tau(t) + \gamma_\tau(t),
\end{equation*}
where $\gamma_\tau$ is a quantity depending on many things, but which is uniformly bounded in $\tau>0$ and $t\geq 0$, again thanks to the fact that the $S_\tau(t)$'s, the $I_\tau(t)$'s and the $R_\tau(t)$'s are bounded by the total size of the population. Moreover, for all $\nu\in\mathbb{R}^N$, we have $\mdpl \nu\in\mathcal{H}$, since the columns of $\mdpl$ sum to $0$. As a result, for all $\tau >0$, and all $t\geq 0$, we have $\frac{d\nu_\tau}{dt}(t)\in\mathcal{H}$ (since $\mathcal{H}$ is a finite dimensional vector space, so derivatives of functions living on it stay in it), and $\frac{\mdpl}{\tau}\nu_\tau(t)\in\mathcal{H}$ (by what precedes), so that $\gamma_\tau(t)$ also belongs to $\mathcal{H}$.

Finally, for all $\tau>0$, and all $t\geq 0$, we have
\begin{equation*}
\ba
\nu_\tau(t) = &\exp\paren{\mdpl\frac{t}{\tau}}\nu_\tau(0)
+ \int_0^t \exp\paren{\mdpl\frac{t-s}{\tau}} \gamma_\tau(s) \, ds \\
&=\int_0^t \exp\paren{\mdpl\frac{t-s}{\tau}} \gamma_\tau(s) \, ds,
\ea
\end{equation*}
since $\nu_\tau(0)=0$, as the initial condition is proportional to $\statiod$ by assumption.
Let us now fix $\eps>0$.
Let $\bo\subset\mathcal{H}$ be a ball such that that, for all $\tau>0$ and $t\geq 0$, we have $\gamma_\tau(t)\in\bo$. Since $\mdpl$ only has eigenvalues with (strictly) negative eigenvalues on $\mathcal{H}$, we may find some threshold $u_\mathrm{min} >0$ such that, for all $u\geq u_\mathrm{min}$, for all $\nu\in\bo$, we have
\begin{equation*}
\nrm{\exp\paren{\mdpl u}\nu} < \eps.
\end{equation*}
Moreover, there exists a constant $\kappa\geq 1$ such that, for all $u\geq 0$, for all $\nu\in\bo$, we have $\nrm{\exp\paren{\mdpl u}\nu} \leq \kappa$.
Let us now consider $\tau \leq \inv{u_\mathrm{min}}\frac{\kappa}{\eps}$ (it is chosen so that, for $t-s\geq \frac{\eps}{\kappa}$, we have $\frac{t-s}{\tau} \geq u_\mathrm{min}$).
Therefore, for all $t \leq T$, we have
\begin{equation*}
\ba
\nrm{\nu_\tau(t)} &\leq
\int_0^t \nrm{\exp\paren{\mdpl\frac{t-s}{\tau}} \gamma_\tau(s)} \, ds \\
&= \int_0^{\frac{\eps}{\kappa}}  \nrm{\exp\paren{\mdpl\frac{t-s}{\tau}} \gamma_\tau(s)} \, ds \\
&+ \int_{\frac{\eps}{\kappa}}^T  \nrm{\exp\paren{\mdpl\frac{t-s}{\tau}} \gamma_\tau(s)} \, ds \\
&\leq \frac{\eps}{\kappa} \kappa + \int_{\frac{\eps}{\kappa}}^T \eps \, ds \\
&\leq \eps + \eps \paren{T-\frac{\eps}{\kappa}} \leq \eps \, \paren{1+T}.
\ea
\end{equation*}
As a result, for all $\tau\leq \inv{u_\mathrm{min}}\frac{\kappa}{\eps}$, with $u_\mathrm{min}$ and $\kappa$ chosen independently of $\tau$, we have $\sup_{0\leq t \leq T} \nrm{\nu_\tau(t)} \leq \eps\paren{1+T}$. We have therefore proven that $\dist{S_\tau(t)}{\mathbb{R}\statiod}\to 0$, as $\tau\to 0$. We would prove likewise the result for $I_\tau$ and $R_\tau$, which concludes the proof.
\end{proof}

We can now prove \relem{limit_trajs_diffusion_rate_infty}.
\begin{proof}
Let $T >0$.
Now, let us write, for all $t\geq 0$, using the notations of the proof of \relem{traj_close_line_infty_quick_diffusion},
\begin{equation}
\label{eq:decomp_vect_suscep}
S^\tau(t) = s_\tau(t)\statiod + \nu_\tau(t),
\end{equation}
and likewise for $I^\tau$ and $R^\tau$. Since $S_0$, $I_0$ and $R_0$ are proportional to the stationary distribution, we know, thanks to \relem{traj_close_line_infty_quick_diffusion}, that $\nu^\tau(t)$ tends towards $0$, uniformly on each $[0,T]$, with $T \geq 0$.
Since for all $\tau$, $S^\tau$, $I^\tau$ and $R^\tau$ are nonnegative, and bounded by the total population, the family of functions $\tpl{s^\tau}{i^\tau}{r^\tau}$, defined on $[0,T]$, has values in a bounded set of the continuous functions from $\mathbb{R}_+$ to $\mathbb{R}^{3}$, endowed with the infinity norm on each of $s$, $i$ and $r$, that is $\nrmi{\tpl{s}{i}{r}}=\max\tpl{\nrmi{s}}{\nrmi{i}}{\nrmi{r}}$. Moreover since, for all $\tau \geq 0$, for all $0 \leq t \leq T$, we have
\begin{equation*}
S^\tau(t) = S^\tau(0) - \int_0^t \vecbeta S^\tau(s) \odot I^\tau(s) + \frac{\mdpl}{\tau} S^\tau(s) \, ds,
\end{equation*}
we obtain, summing along the coordinates,
\begin{equation*}
s^\tau(t) = s^\tau(0) - \sum_\node \beta_\node \statiod(\node)^2 \int_0^t s^\tau(s) i^\tau(s) \, ds.
\end{equation*}
As a result, $s^\tau$ is differential on $[0,T]$, its derivative satisfies
\begin{equation*}
\frac{ds^\tau}{dt} = - \paren{\sum_\node \beta_\node \statiod(\node)^2} s^\tau(t)i^\tau(t),
\end{equation*}
and its derivative is threfore bounded on $[0,T]$, uniformly on $\tau$. The same holds for $i^\tau$ and $r^\tau$. Therefore, $\tpl{s^\tau}{i^\tau}{r^\tau}$ is equi-continuous. As a result, the family $\paren{\tpl{s^\tau}{i^\tau}{r^\tau}}_{\tau \geq 0}$ is pre-compact \cite{suth2004} in the Banach space of functions from $[0,T]$ to $\mathbb{R}^3$, endowed with the infinity norm defined above, so that, provided it admits an unique adherence value, it converges towards this one.

Let us consider a converging subsequence, and still index it by $\tau$, to simplify notations. 
As a result, the limit $s$ satisfies
\begin{equation*}
s(t) = s(0) + \sum_\node \beta_\node \statiod(\node)^2 \int_0^s s(s) i(s) \, ds,
\end{equation*}
and likewise for $i$ and $s$. Moreover, for all $t\geq 0$, we know that $s^\tau(0)=\sum_\node S^\tau(0)=\sum_\node S_\node(0)$ which does not depend on $\tau$, therefore $s(0)=\sum_\node S_\node(0)$, and likewise for $i_0$ and $r_0$. Therefore, $\tpl{s}{i}{r}$ is solution of the scalar system described in the statement of the Lemma, and by uniqueness of the solutions of this system, satisfying the initial condition $\tpl{s_0}{i_0}{r_0}$, the tuple is uniquely defined. Therefore, the family of $\tpl{s^\tau}{i^\tau}{r^\tau}$'s admits a unique adherence value, and converges towards this one. Plugging back into \reeq{decomp_vect_suscep}, we see that $S^\tau \to s\statiod$, as $\tau\to 0$, uniformy on $[0,T]$, and likewise for $I^\tau$ and $R^\tau$, which concludes the proof.
\end{proof}

Let us prove \recor{rank_diff_rate_infty}.
\begin{proof}
Applying the results of \relem{limit_trajs_diffusion_rate_infty} with $T=t_K$, we know that the observation matrix writes
\begin{multline}
\label{eq:dvpt_obs_mat}
\obsmat\cpl{\paren{t_k}}{\flot^\tau(\mdpl)} = \\
\statiod \otimes \paren{s(t_1), \ldots, s(t_K), i(t_1), \ldots i(t_K), r(t_1)\,\ldots r(t_K)} \\
+ \go{\eps(\tau)},
\end{multline}
as $\tau\to 0$, 
where $\eps(\tau)\to 0$, when $\tau\to 0$.
Indeed, using the notations of the proof of \relem{limit_trajs_diffusion_rate_infty}, we know that for each $1\leq k \leq K$, the column $S(t_k)$ of the observation matrix (for instance), writes $S(t_k) = s(t_k)\statiod + \nu_\tau(t_k)$, and $\nu_\tau$ tends to $0$, as $\tau\to 0$, uniformly on $[0,t_K]$. The first term of \reeq{dvpt_obs_mat} is of rank $1$, as $s+i+r=1$, identically. The conclusion follows from the continuity of the numerical rank of a matrix (the numerical rank is the number of singular values greater than some threshold, and these values depend continuously on the matrix).
\end{proof}

\section{Proofs for \resec{symmetries}, symmetries}
\label{app:symmetries}
Let us first show the effect of node-renumbering on the trajectories.
\begin{lemma}[Node re-numbering]
\label{lem:perm_and_trajs}
Let $\mathcal{M}=\tpl{\mdpl}{\cpl{\beta}{\delta}}{X_0}$ be a model and $P$ a permutation matrix. Then, for all $t\geq 0$, we have 
\begin{equation*}
\ba
\Phi_t(P \cdot \mathcal{M}) &= P \cdot \flot_t(\mathcal{M}) \\
&= \tpl{PS(t)}{PI(t)}{PR(t)},
\ea
\end{equation*}
writing $\tpl{S(t)}{I(t)}{R(t)}=\flot_t(\modl)$, and using the conventions of \resec{notations}.
\end{lemma}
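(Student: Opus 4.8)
The plan is to verify this equivariance property by direct substitution into the governing ODE, followed by an appeal to uniqueness of solutions. Recall that, consistently with the notion of symmetry introduced in \resec{symmetries}, the action of the permutation matrix $P$ on a model $\modl=\tpl{\mdpl}{\cpl{\vecbeta}{\vecdelta}}{X_0}$ reads $P\cdot\modl = \tpl{P\mdpl P^{-1}}{\cpl{P\vecbeta}{P\vecdelta}}{PX_0}$, where $P\vecbeta$ (resp.\ $P\vecdelta$) denotes the permuted vector of epidemiological coefficients. Writing $\tpl{S(t)}{I(t)}{R(t)}=\flot_t(\modl)$, I would set $\tilde S = PS$, $\tilde I = PI$, $\tilde R = PR$, and show that $t\mapsto \tpl{\tilde S(t)}{\tilde I(t)}{\tilde R(t)}$ solves \reeq{sir-metapop} for the model $P\cdot\modl$, with the correct initial condition.

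The two algebraic facts that make the computation go through are, first, that permutation commutes with the coordinate-wise product, namely $P(u\odot v) = (Pu)\odot(Pv)$ for all $u,v\in\mathbb{R}^N$ (both sides have $n$-th coordinate $u_{\sigma^{-1}(n)}v_{\sigma^{-1}(n)}$), and second, the trivial identity $(P\mdpl P^{-1})(Pw) = P(\mdpl w)$. Applying $P$ to the first line of \reeq{sir-metapop} then gives $\frac{d\tilde S}{dt} = P(-\vecbeta S\odot I + \mdpl S) = -(P\vecbeta)\odot\tilde S\odot\tilde I + (P\mdpl P^{-1})\tilde S$, which is exactly the $S$-equation of $P\cdot\modl$. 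The $I$- and $R$-equations are treated identically, the only additional ingredient being $P(\vecdelta\odot I) = (P\vecdelta)\odot\tilde I$. The initial condition is immediate, since $\tpl{\tilde S(0)}{\tilde I(0)}{\tilde R(0)} = PX_0$ is precisely the initial condition of $P\cdot\modl$.

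Having shown that the permuted trajectory satisfies the differential equation and initial condition that define $\flot_t(P\cdot\modl)$, I would conclude by uniqueness of solutions of \reeq{sir-metapop}, invoked already in the proof of \relem{set_same_traj}, that $\flot_t(P\cdot\modl) = \tpl{PS(t)}{PI(t)}{PR(t)}$ for all $t\geq 0$, which is the claim.

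I do not expect any serious obstacle: the statement is an equivariance property and the proof is essentially bookkeeping. The only point requiring care is keeping track of the two readings of $\vecbeta$ and $\vecdelta$, as vectors acted on by $P$ versus as diagonal matrices conjugated by $P$, and checking that the commutation identity $P(u\odot v) = (Pu)\odot(Pv)$ renders these readings compatible in every term of the system.
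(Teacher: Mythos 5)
Your proposal is correct and follows essentially the same route as the paper's proof: show that the permuted trajectory $\tpl{PS}{PI}{PR}$ solves \reeq{sir-metapop} for the permuted model $\tpl{P\mdpl P^{-1}}{\cpl{P\vecbeta}{P\vecdelta}}{PX_0}$ and conclude by uniqueness of solutions. The only difference is presentational — you use the vectorized identities $P(u\odot v)=(Pu)\odot(Pv)$ and $(P\mdpl P^{-1})(Pw)=P(\mdpl w)$ where the paper carries out the same verification coordinate-by-coordinate via $[PS]_i = S_{\sigma(i)}$.
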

This implies immediately that if $P$ is the matrix of an automorphism of our model, then for all $t\geq 0$, we have $\Phi_t(P \cdot \mathcal{M}) = \Phi_t(\mathcal{M})$, and therefore $P\cdot\flot_t(\mdpl)=\flot_t(\mdpl)$. In other words, if $i$ and $j$ are in a same orbit of $P$, then the trajectories at nodes $i$ and $j$ are the same: for all $t\geq 0$, $S_i(t)=S_j(t)$, and likewise for $I$ and $R$.
\begin{proof}
Let us show that $(PS, PI, PR)$ is a solution of the differential equation also satisfied by the flow $\flot\tpl{P\mdpl P^{-1}}{\cpl{P\beta}{P\delta}}{PX_0}$, which is enough to conclude by unicity of the solutions sharing the same initial condition. Let $1 \leq i \leq N$. Then, we have
\begin{align*}
    \frac{d}{dt} [PS]_i &=  \left [P \frac{dS}{dt} \right ]_i\\
    &= \frac{dS_{\sigma(i)}}{dt} \\
    &= - \beta_{\sigma(i)} S_{\sigma(i)} I_{\sigma(i)} + [M S]_{\sigma(i)} \\
    &= - [P\beta]_{i} [PS]_{i} [PI]_{i} + [P M S]_{i} \\
    &= - [P\beta]_{i} [PS]_{i} [PI]_{i} + [(P M P^{-1}) (P S)]_{i}.
\end{align*}
Therefore, we have
\begin{equation*}
\frac{d PS}{dt} = - \paren{P \beta} \paren{\paren{PS}\odot \paren{PI}} + \paren{P\mdpl P^{-1}}\paren{PS}.
\end{equation*}
So $PS$ satisfies the required equation, and $I$, $R$ do as well, which we show using the same method, and which allows us to conclude.
\end{proof}
We can now prove \relem{classification}. 
\begin{proof}
Let us first prove $3) \Rightarrow 2)$.
Let $\sigma\in\sglin$. For any $x\in\sglin$, we have
 \begin{equation*}
 \ba
P(\sigma) \mdpl P(\sigma)^{-1}x&=
P(\sigma) \mdpl x \quad \text{as} \, x \in \fix(\sglin) \\
&= \mdpl x \quad \text{as} \, \mdpl x \in \fix(\sglin).
\ea
\end{equation*}
We now average the nodes which give the same trajectories, which is the standard method of averaging under a group action. Let
\begin{equation*}
\bar\mdpl = \frac{1}{\# \sglin} \sum_{\sigma \in \sglin} P(\sigma) \mdpl P(\sigma)^{-1}.
\end{equation*}
We therefore obtain by construction that, for all $X_0\in\fix(\sglin)$, we have $\sglin \subset \aut(\bar\mdpl, (\beta, \delta), X_0)$. The fact $\bar\mdpl$ and $\mdpl$ agree on $\fix(\sglin)$ is a direct consequence of the averaging.
Now, since $\bar\mdpl$ is symmetric with respect to $\sglin$, thanks to \relem{perm_and_trajs}, we know that the trajectories it generates are also symmetric with respect to $\sglin$. As a result, they belong to $\fix(\sglin)$. Therefore, they also satisfy the differential equation with $\mdpl$, as we have just proven that $\mdpl$ and $\bar\mdpl$ agree on $\fix(\sglin)$.

Then, $2) \Rightarrow 1)$ is a direct consequence of \relem{perm_and_trajs}.

To prove $1) \Rightarrow 3)$, let $S_0\in\fix(\sglin)$, and let us show that $\mdpl S_0\in\fix(\sglin)$. Choose $I_0$ and $R_0$ such that $I_0\in\fix(\sglin)$, and define $X_0=\tpl{S_0}{I_0}{R_0}$. For every $\sigma\in\sglin$, for every node $i$, we have
\begin{equation*}
\left[\mdpl S(0)\right]_{\sigma(i)} = \frac{dS_{\sigma(i)}}{dt}(0) + \beta_{\sigma(i)} S_{\sigma(i)}(0) I_{\sigma(i)}(0).
\end{equation*}
Now, $\beta_\sigma(i)=\beta_i$ by assumption on the coefficients, and $S_{\sigma(i)}(0) I_{\sigma(i)}(0)=S_i(0)I_i(0)$ by assumption. Moreover, we have
\begin{equation*}
\ba
\frac{dS_{\sigma(i)}}{dt}(0) &=
\lim_{t\to 0} \frac{S_{\sigma(i)}(t) - S_{\sigma_i}(0)}{t}
= \lim_{t\to 0} \frac{S_i(t) - S_i(0)}{t} \\
&= \frac{dS_i}{dt}(0),
\ea
\end{equation*}
where the second equality is a consequence of the fact that trajectories remain in $\fix(\sglin)$. As a result, we have
\begin{equation*}
\left[\mdpl S(0)\right]_{\sigma(i)}
= \frac{dS_i}{dt}(0) + \beta_i S_i(0) I_i(0)
= [\mdpl S(0)]_i.
\end{equation*}
Therefore, for every $\sigma\in\sglin$, we have $P(\sigma)\mdpl S_0=\mdpl S_0$ so that, by definition, we have $\mdpl S_0\in\fix(\sglin)$.

\end{proof}

\begin{proof}[\recor{synthesis}]
Let us first prove that $Z = \mdpl - \overline\mdpl$ vanishes on $\fix(\sglin)$. For all $X_0 \in \fix(\sglin)$, since $\mdpl$ and $\overline\mdpl$ produce the same trajectories, we have, for all $t\geq 0$,
 \begin{equation*}
\frac{dS}{dt} = - \beta S(t) \odot I(t) + \mdpl S(t) = - \beta S(t) \odot I(t) + \mdpl'S(t),
\end{equation*}
so that $ZS(t)=\paren{\mdpl-\mdpl'}S(t)=0$. As a result, $ZS(0)=0$. This is true for all $S(0) \in \fix(\sglin)$ (as $X_0=\tpl{S_0}{I_0}{R_0}$ is arbitrary provided $S_0, I_0$ and $R_0$ all belong to $\fix(\sglin)$), so $Z$ vanishes on ${\fix(\sglin)}$.
 
\end{proof}

\begin{proof}[\recor{synthesis}]
This result is a particular case of the following \relem{mat_vanish_fixsglin}, when we let $\sglin$ be the set of permutations under which the trajectories are invariant. In that case, the number of orbits of $\sglin$ is the number of different trajectories.
\end{proof}

\begin{lemma}[Matrices vanishing on $\fix(\sglin)$]
\label{lem:mat_vanish_fixsglin}
For any group of permutations $\sglin$, a basis of the space of matrices $Z$ vanishing on $\fix(\sglin)$ is given by the
$Z^{i, j, k}$'s introduced before \recor{synthesis}, where $i$ and $j$ are in the same orbit under $\sglin$.
Consequently, the dimension of this space is 
$$(N-1) \# \{\text{orbits under }\sglin\}.$$
\end{lemma}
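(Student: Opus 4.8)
The plan is to exploit the rank-one structure of the redirection matrices. Writing $e_1,\dots,e_N$ for the canonical basis of $\mathbb{R}^N$ and $E_{r,s}=e_r e_s^\top$, each generator factors as
\begin{equation*}
Z^{i,j,k}=E_{k,i}-E_{k,j}-E_{N,i}+E_{N,j}=(e_k-e_N)(e_i-e_j)^\top .
\end{equation*}
From this factorization two facts are immediate: its column sums vanish, since $1_N^\top(e_k-e_N)=0$; and for any $V$, one has $Z^{i,j,k}V=(V_i-V_j)(e_k-e_N)$, so $Z^{i,j,k}$ annihilates $V$ as soon as $V_i=V_j$. In particular, when $i,j$ lie in a common orbit, $Z^{i,j,k}$ vanishes on $\fix(\sglin)$, since vectors of $\fix(\sglin)$ are exactly those that are constant on each orbit of $\sglin$.

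First I would pin down the ambient space. Let $O_1,\dots,O_m$ be the orbits of $\sglin$, so that $\dim\fix(\sglin)=m$ with the indicators $1_{O_1},\dots,1_{O_m}$ as a basis. A matrix $Z$ kills $\fix(\sglin)$ if and only if it kills every $1_{O_\ell}$, i.e. if and only if each row of $Z$ is orthogonal to all the $1_{O_\ell}$; equivalently, each row lies in $W:=\fix(\sglin)^\perp=\{w\in\mathbb{R}^N:\sum_{s\in O_\ell}w_s=0,\ \ell=1,\dots,m\}$, a subspace of dimension $N-m$. Since the matrices at stake are differences $Z=\mdpl-\overline{\mdpl}$ of diffusion matrices, and diffusion matrices have vanishing column sums, they additionally satisfy $1_N^\top Z=0$; so their columns lie in $U:=\{u\in\mathbb{R}^N:\sum_r u_r=0\}=1_N^\perp$, of dimension $N-1$. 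Thus the space $\mathcal{Z}$ to describe is exactly the matrices whose rows lie in $W$ and whose columns lie in $U$.

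Next I would identify $\mathcal{Z}$ with a tensor product and span it with the generators. Viewing matrices as $\mathbb{R}^N\otimes\mathbb{R}^N$ via $e_r e_s^\top\leftrightarrow e_r\otimes e_s$, the condition \enquote{columns in $U$} reads $\mathcal{Z}\subseteq U\otimes\mathbb{R}^N$ and \enquote{rows in $W$} reads $\mathcal{Z}\subseteq\mathbb{R}^N\otimes W$, so $\mathcal{Z}=U\otimes W$ and $\dim\mathcal{Z}=\dim U\cdot\dim W=(N-1)(N-m)$. On the other hand, the left factors $e_k-e_N$ with $1\le k\le N-1$ span $U$, while the right factors $e_i-e_j$ with $i,j$ in a common orbit span $W$, since $\bigoplus_\ell\{w:\mathrm{supp}(w)\subseteq O_\ell,\ \sum_{s\in O_\ell}w_s=0\}=W$. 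Hence the outer products $(e_k-e_N)(e_i-e_j)^\top=Z^{i,j,k}$ span all of $U\otimes W=\mathcal{Z}$, proving that these generators form a generating family of $\mathcal{Z}$, of dimension $(N-1)\,(N-\#\{\text{orbits under }\sglin\})$; taking $\sglin$ to be the full symmetry group, so that orbits correspond to distinct trajectories, recovers exactly the lower bound announced in \recor{synthesis}, which serves as a sanity check.

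The routine part is the two dimension counts; the one point to treat with care is the interaction of the two constraints. It is tempting to add $\dim W$ row-constraints and $N$ column-constraints independently, but the vanishing-column-sum condition on a matrix whose rows already lie in $W$ imposes only $\dim W=N-m$ further independent conditions, because the row-sum vector then automatically lies in $W$; this is what reconciles $N(N-m)-(N-m)=(N-1)(N-m)$ with the tensor count, and presenting the $U\otimes W$ identification sidesteps the bookkeeping entirely. A secondary caveat worth flagging is that the full list of $Z^{i,j,k}$ over all same-orbit pairs $i<j$ is overcomplete, hence a spanning family rather than a basis in the literal sense; a genuine basis is extracted orbit by orbit, by fixing one representative $o_\ell$ of each orbit $O_\ell$ and keeping only the pairs $(o_\ell,j)$ with $j\in O_\ell\setminus\{o_\ell\}$, together with $k=1,\dots,N-1$, which yields precisely $(N-1)(N-m)$ independent matrices.
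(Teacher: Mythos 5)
Your proof is correct, and it takes a genuinely different route from the paper's. The paper proves spanning by explicit elimination: working orbit by orbit and row by row, it subtracts from an arbitrary $Z$ vanishing on $\fix(\sglin)$ suitable multiples of the matrices $Z^{i_l,i_{l+1},k}$ (with $i_l,i_{l+1}$ consecutive elements of an orbit) so as to annihilate its coefficients one by one; the last coefficient in each orbit vanishes because $Z$ kills the orbit indicator $\sum_l e_{i_l}$, and the last row vanishes because the column sums do. Your factorization $Z^{i,j,k}=(e_k-e_N)(e_i-e_j)^\top$ replaces all this bookkeeping by the identification of the space with the tensor product $U\otimes W$, where $U=1_N^\perp$ and $W=\fix(\sglin)^\perp$, after which both the spanning property and the dimension count are immediate: spanning families of $U$ and of $W$ tensor to a spanning family of $U\otimes W$. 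This is shorter and more transparent; the only step you assert without proof, $(U\otimes\mathbb{R}^N)\cap(\mathbb{R}^N\otimes W)=U\otimes W$, is standard and follows in one line from adapted bases. Your two caveats are well taken and in fact correct the paper. First, the full family of $Z^{i,j,k}$'s over all same-orbit pairs is only a spanning family, not a basis: for an orbit containing $i,j,l$ one has $Z^{i,j,k}+Z^{j,l,k}=Z^{i,l,k}$, so the paper's assertion that they are \gu{clearly linearly independent} holds only for the sub-family it actually manipulates (consecutive pairs within each orbit, or your choice of one representative per orbit). Second, the dimension is $(N-1)\paren{N-\#\{\text{orbits under }\sglin\}}$, not $(N-1)\,\#\{\text{orbits under }\sglin\}$ as written in the statement of \relem{mat_vanish_fixsglin}; your value is the one consistent both with the basis you exhibit and with the lower bound announced in \recor{synthesis}, so the stated formula is a typo that your argument exposes.
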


\begin{proof}

Let $Z$ be such a matrix.
As $Z = \mdpl - \mdpl'$ with $M, M'$ diffusion matrices, the columns of $Z$ have vanishing sums. 
Moreover, $Z$ has to vanish on any vector fixed by $\sglin$. These vectors are precisely the $x \in \mathbb{R}^N$ such that $x_i = x_{\sigma(i)}$ for all $i\in\nods$, and for all $\sigma \in \sglin$. Thus, they are the $x$'s such that $E_{k, i} x = E_{k, \sigma(i)} x$ for every $i, k \in \nods$, and $\sigma \in \sglin$.
Therefore, $Z^{i, j, k} x = 0$ for all $x \in \fix(\sglin)$ when $i, j, k$ satisfy the assumptions of the lemma. The $Z^{i,j,k}$'s are clearly linearly independent. Let us show that they generate the space of all $Z$'s.

Let $Z$ vanish on $Fix(\sglin)$. We will make all of the coefficients of $Z$ vanish by substracting multiples of $Z^{i, j, k}$s, which will prove that Z is indeed a linear combination of the $Z^{i, j, k}$s. Let $\mathcal{O} = \{i_1 < ... < i_m\}$ be an orbit of $\{1, ..., n\}$ of cardinal $m$ under the action of $\sglin$. 

Let us remark that for $1 \leq k \leq N - 1$, $1 \leq l < m$, $Z^{i_l, i_{l+1}, k}$ satisfies the conditions of the lemma and has its $(k, i_l)$ coefficient equal to 1, its $(k, i_{l+1})$ coefficient equal to -1.

Therefore, $Z - Z_{k, i_1} Z^{i_1, i_2, k}$ has its $(k, i_1)$ coefficient equaling zero, and its $(k, i_2)$ coefficient equal to $Z_{k, i_2} + Z_{k, i_1}$, and aside from the last line (which we will ignore for the moment) these are the only coefficients changing.

Then, if $m \geq 3$, we can reiterate this by considering $Z - Z_{k, i_1} Z^{i_1, i_2, k} - (Z_{k, i_2} + Z_{i_1, k}) Z^{i_2, i_3, k}$ and the obtained matrix will have the $(k, i_2)$ coefficient vanishing and the $(k, i_3)$ coefficient changing to $Z_{k, i_3} + Z_{k, i_2} + Z_{k, i_1}$, and the $(k, i_1)$ coefficient is still 0. 

We iterate this method exactly $m - 1$ times to obtain $Z'$. By construction, $Z'$ has each of the $(k, i_l), 1 \leq l \leq m$ coefficients vanishing except maybe the $l = m$ one, equaling $Z_{k, i_1} + ... + Z_{k, i_m}$.

But then this one is also zero. Indeed, if $(e_j)_j$ is the canonical basis, as $\sum_{l = 1}^m e_{i_l} \in Fix(H)$, we have $Z' \sum_{l = 1}^m e_{i_l} = 0$, and by looking the $k$-th coefficient, we obtain $Z'_{k, i_m} = Z_{k, i_1} + ... + Z_{k, i_m} = 0$.

We can then iterate this construction on every line except the last (meaning for $1 \leq k \leq N - 1$) and every orbit to obtain $Z''$. By construction, every line of $Z''$ is zero, except maybe the last ($k = N$), but then as the columns of $Z''$ have a vanishing sum (as $Z''$ is a linear combination of $Z$ and the $Z^{i, j, k}$), $Z'' = 0$. Thus, $Z$ is in the space generated by the $Z^{i, j, k}$. 

\end{proof}

\end{document}